\DeclareMathOperator*{\soft}{soft}
\DeclareMathOperator{\ba}{\mathbf{a}}
\DeclareMathOperator{\bd}{\mathbf{d}}
\DeclareMathOperator{\bc}{\mathbf{c}}
\DeclareMathOperator{\bA}{\mathbf{A}}
\DeclareMathOperator{\bF}{\mathbf{F}}
\DeclareMathOperator{\bI}{\mathbf{I}}
\DeclareMathOperator{\bv}{\mathbf{v}}
\DeclareMathOperator{\bw}{\mathbf{w}}
\DeclareMathOperator{\be}{\mathbf{e}}
\DeclareMathOperator{\bx}{\mathbf{x}}
\DeclareMathOperator{\by}{\mathbf{y}}
\DeclareMathOperator{\bu}{\mathbf{u}}
\DeclareMathOperator{\bs}{\mathbf{s}}
\DeclareMathOperator{\bS}{\mathbf{S}}
\DeclareMathOperator{\bD}{\mathbf{D}}
\DeclareMathOperator{\bH}{\mathbf{H}}
\newtheorem{theorem}{Theorem}
\begin{document}

\title{Approximate Extraction of Late-Time Returns via Morphological Component Analysis}

\author{Geoff Goehle} 
\email{goehle@psu.edu}
\author{Benjamin Cowen} 
\author{Thomas E. Blanford} 
\author{J. Daniel Park} 
\author{Daniel C. Brown}  
\affiliation{Applied Research Laboratory,  Pennsylvania State University, State College, PA, 16802, USA}

\begin{abstract}
	A fundamental challenge in acoustic data processing is to separate a measured time series into relevant phenomenological components. A given measurement is typically assumed to be an additive mixture of myriad signals plus noise whose separation forms an ill-posed inverse problem. In the setting of sensing elastic objects using active sonar, we wish to separate the early-time returns (e.g., returns from the object's exterior geometry) from late-time returns caused by elastic or compressional wave coupling.
	
	Under the framework of Morphological Component Analysis (MCA), we compare two separation models using the short-duration and long-duration responses as a proxy for early-time and late-time returns. Results are computed for Stanton's elastic cylinder model as well as on experimental data taken from an in-Air circular Synthetic Aperture Sonar (AirSAS) system, whose separated time series are formed into imagery. We find that MCA can be used to separate early and late-time responses in both cases without the use of time-gating.  The separation process is demonstrated to be robust to noise and compatible with AirSAS image reconstruction.  The best separation results are obtained with a flexible, but computationally intensive, frame based signal model, while a faster Fourier Transform based method is shown to have competitive performance.
\end{abstract}

\maketitle

\section{Introduction}
\label{sec:intro}

Underwater remote sensing using active sonar is typically performed by ensonifying the seafloor and processing the echoes to characterize the response from the objects and the environment. Synthetic Aperture Sonar (SAS) processing generates imagery of the scattering intensity of the ensonified scene, but typically assumes simplified acoustic scattering models akin to geometric optics. As such, the image formation algorithm only accounts for the early-time response of an object and innately couples the arrival time of acoustic energy with its spatial location. However, the overall response from an acoustically interrogated scene, especially the objects, supports additional responses including elastic scattering, structural resonances, and multi-path returns. These late-time responses do not conform to the image formation model, and are improperly associated to pixels during the image reconstruction process. They appear in the image as smearing or blurring \cite{Plotnick:2016a}, as seen in the top right plot of Figure~\ref{fig:airsas-chbf}. Additional artifacts arise due to the fact that the late-time signal structure has been spectrally modified by the acoustic coupling, structural vibration, and re-radiation back to the receiver. Differences in the signal structure elicit alternative processing approaches to help characterize the object. For example, localization of some late-time returns such as multiple scattering could be performed with some preprocessing steps before the image formation \cite{park-late}.

As a first step, we are motivated to separate the overall response into multiple components, each of which has some distinguishing property. In terms of signal decomposition, we hypothesize the overall response is a superposition of multiple components as well as noise at an unknown level. Decomposing this additive mixture into its components is an ill-posed inverse problem. Separation of the early-time and late-time returns from a non-homogeneous field of scatterers is a particularly challenging problem due to the diversity of acoustic effects \cite{pareige1989internal}.  While high-\(Q\) elastic responses such as whispering gallery modes produce long-duration ringing, low-\(Q\) modes such as surface wavepackets produce short-duration ringing, which can either arrive coincidentally with the geometrically scattered return or later in time \cite{kargl-shell}.  Multi-path returns can also arrive in the late-time, but are not considered in this work.

Various methods of separating early-time and late-time returns exist, from simple time gating to subtracting the response of a rigid object from an elastic one with an identical geometry.  A fundamental challenge with time-gating is that the early-time and late-time responses from a field of scatterers will overlap in time, preventing a clean separation of components.  Subtracting the responses of objects with the same geometry but different material properties is an important tool, but limited to analytic or laboratory settings.  In this paper, we approach this problem using a relatively recent technique in convex optimization known as Morphological Component Analysis (MCA)~\cite{ogMCA1,ogMCA2,selesSALSA}. MCA is an optimization framework where each component of an additive mixture is identified by its ability to be sparsely represented by a unique linear operator, such as a frame or a dictionary. In this paper, we present two sparse representation frameworks for discriminating acoustic phenomena and compare their performance.

While we are motivated by the separation of the early-time and late-time responses, the paper will focus on the related problem of separating the short-duration and long-duration components of a time series.  In this context, a short duration component is any signal component which has a short time duration, regardless of the physical source.  This will include both the initial geometrically scattered return from the object as well as any late-time wavepackets resulting from surface coupling.  Long duration components will generally include long tailed exponential decays caused by high-\(Q\) resonance modes.  The reason we focus on short-duration/long-duration separation versus early-time/late-time is that we are motivated by the application to sonar imaging, where time series feature multiple superimposed returns with start times that are not known {\em a priori}.  As such none of the separation techniques presented here will rely on time gating.

Section~\ref{sec:theory} describes the MCA framework and the two sparsification transforms featured in this paper.  In Section~\ref{sec:stanton} we use MCA to separate the short-duration and long-duration components of an analytic time series produced by Stanton's elastic cylinder model \cite{stanton2}. In Section~\ref{sec:airsas} we apply the same MCA techniques to experimental data collected using an in-Air circular Synthetic Aperture Sonar (AirSAS) \cite{airsas} and demonstrate short-duration/long-duration separation on AirSAS imagery. Section~\ref{sec:discussion} concludes with a discussion of MCA as applied to acoustic time series.

\subsection*{Notation}
Brackets are used to denote the scalar elements of vectors, e.g. for $\by\in\mathbb{C}^N$ we have
\linenomath
\begin{equation*}
	\by = \left[y[0], y[1], ..., y[N-1]\right].
\end{equation*}
Subscripts are used to differentiate the vectors, matrices, and parameters associated with the distinct components of the signal that we wish to separate. For example, $\by = \by_1+ \by_2$ denotes a vector signal $\by$ composed of two vector components $\by_i$, each of which may have an associated scalar parameter $\lambda_i$ and matrix parameter $\bA_i$ for $i=1,2$.

\section{Morphological Component Analysis}
\label{sec:theory}
For the following analysis we assume that our measured acoustic data $\by$ is an additive mixture of $D$ morphologically distinct components:
\linenomath
\begin{align}
	\by = \sum_{i=1}^D \by_i.
\end{align}
The recovery of these components is an ill-posed inverse problem because there are infinitely many trivial solutions.
The MCA framework~\cite{ogMCA1} addresses this issue by requiring each component $\by_i$ to admit sparse representation $\bx_i$ in a corresponding transformed space given by a linear operator $\bA_i:\mathbb{C}^{M_i}\to\mathbb{C}^N$. In this context, each $\bA_i$ is called a synthesis operator because it synthesizes coefficients $\bx_i$ into the signal domain. We can write the MCA signal model as
\begin{align}
	\by = \sum_{i=1}^D \bA_i \bx_i,
\end{align}
under which the problem of signal separation becomes a convex optimization problem: we want to find sparse encodings $\bx_i$ such that the original data is preserved. For $D=2$, the optimization over sparse coefficients $\bx_i$ is written as
\begin{equation}
	\label{eq:mca-bp}
	\begin{aligned}
		\text{arg}\min_{\bx_1, \bx_2}\ &\lambda_1 \|\bx_1\|_1 + \lambda_2\|\bx_2\|_1 \\
		\text{s.t. } \by &= \bA_1\bx_1 + \bA_2\bx_2.
	\end{aligned}
\end{equation}
This minimization problem identifies a sparse set of coefficients \(\bx_i\) so that the original signal $\by$ can be exactly reconstructed. The $\ell_1$-norm, $\|\bx\|_1 = \sum_n |x(n)|$, enforces sparsity in a minimization context by penalizing all non-zero components and by introducing a thresholding operation in the optimization algorithms discussed herein that sets small values to zero when possible. The $\lambda_i \in \mathbb{R}_+$ are tunable parameters that affect the severity of penalizing non-zero coefficients and can be used to prioritize one representation over the other. This problem is also referred to as Dual Basis Pursuit (BP)~\cite{selesSALSA,ogBP}.  The equality constraint can be relaxed to perform denoising, a problem known as Dual Basis Pursuit Denoising (BPD), and in this case is represented by
\begin{align}
	\label{eq:mca-bpd}
	\begin{aligned}
		\text{arg}\min_{\bx_1, \bx_2}\ & \begin{array}{l} \lambda_1 \|\bx_1\|_1 + \lambda_2\|\bx_2\|_1 + \\ \frac{1}{2}\| \by - \bA_1\bx_1 - \bA_2\bx_2 \|^2. \end{array}
	\end{aligned}
\end{align}
For BPD the \(\lambda_i\) parameters control both the weights applied to the encodings as well as the degree to which sparsity is prioritized over fidelity, with larger values of \(\lambda_i\) producing sparser, less accurate, reconstructions.

In MCA, our ability to identify a component via its sparse representation hinges on the aptness and mutual exclusivity of each linear operator. In other words, each transform $\bA_i$ should admit sparse representation of its corresponding component signal $\by_i$, but should be inefficient in representing the other components. We wish to decompose \(\by\) into a sum of short-duration components \(\by_1\) and long-duration components \(\by_2\) (as proxies for the early-time and late-time returns as discussed in Section~\ref{sec:intro}), and thus the problem at hand is to design $\bA_1$ and $\bA_2$ to describe those respective phenomena. We focus on over-complete tight-frame operators $\bA_i$~\cite{framesforundergraduates} which, by definition, satisfy \(\bA_i\bA_i^* = p_i\bI\) for \(p_i > 0\).  The subsequent sections discuss specific, promising selections of $\bA_i$ for our application.

Problems~\ref{eq:mca-bp} and~\ref{eq:mca-bpd} are convex and hence have unique, global solutions~\cite{boydConvex,selesSALSA}, but the solutions do not have a closed-form expression due to the non-smooth $\ell_1$-norm. We can use the Alternating Direction Method of Multipliers (ADMM) to formulate these problems as a sequence of easier subproblems, whose iterative solution is guaranteed to converge to the global minimum~\cite{admmConvergence}. The resulting algorithm is called the Split Augmented Lagrangian Shrinkage Algorithm (SALSA)~\cite{salsa, selesSALSA}. SALSA as applied to our MCA BP and BPD problems is written in Algorithm~\ref{alg:mca-bp}, where \(\soft\) represents the soft-thresholding function
\[
\soft(x, T) = \begin{cases} \frac{|x|-T}{|x|} x & |x| > T \\ 0 & |x| \leq T. \end{cases}
\]
Note that the difference between BP and BPD in Algorithm~\ref{alg:mca-bp} is a single constant.  While \(\bx_i\) converges to the solution, in the BP case it is not particularly sparse at any given iteration.  As an alternative, \(\bu_i\) also converges to the solution while being sparser at each iteration.

\begin{algorithm}
\caption{MCA BP/BPD}
\label{alg:mca-bp}
\begin{algorithmic}
\REQUIRE \(\by\), \(\bA_i\), \(\lambda_i\), \(\mu\)
\STATE initialize \(\bx_i = \bA_i^*\by\), \(\bd_i = 0\), \(i = 1,2\)
\IF{performing BP}
\STATE \(\alpha = \frac{1}{p_1 + p_2}\)
\ELSE
\STATE \(\alpha = \frac{1}{\mu + p_1 + p_2}\)
\ENDIF
\REPEAT
\STATE \(\bu_i \gets \soft(\bx_i + \bd_i, \lambda_i/\mu)\), \(i = 1,2\)
\STATE \(\bv_i \gets \bu_i - \bd_i\), \(i=1,2\)
\STATE \(\bc \gets \by - \bA_1\bv_1 - \bA_2\bv_2\)
\STATE \(\bd_i \gets \alpha \bA_i^*\bc\), \(i = 1,2\)
\STATE \(\bx_i \gets \bd_i + \bv_i\), \(i = 1,2\)
\UNTIL{stopping criteria met}
\STATE \(\by_i \gets \bA_i\bx_i\), \(i = 1,2\)
\end{algorithmic}
\end{algorithm}

The \(\lambda_i\) scalars act as a weighting factor influencing how energy is prioritized between the \(\bx_i\) and, in the case of BPD, how much sparsity is prioritized over reconstruction fidelity.  While in practice it is often useful to tune the \(\lambda_i\) to achieve a desired separation, in order to support comparative analysis for this paper we will use a common value \(\lambda = \lambda_1  = \lambda_2\).  The choice of common \(\lambda\) does not effect the solution for BP, but is important for BPD.  When performing BPD there is a maximum effective \(\lambda\)-value given by
\[
\lambda_{\max} = \max( \| \bA_1^* \by\|_\infty, \|\bA_2^*\by\|_\infty)
\]
such that for all \(\lambda \geq \lambda_{\max}\) the solution is zero \cite[Section V.B]{compressive}.  We will generally choose \(\lambda\) as a percentage of \(\lambda_{\max}\).  Additionally, while it is possible to vectorize \(\lambda\) to achieve even finer grained control over the separation weights we will not do so here.

\subsection{FFT MCA}
\label{sec:theory-fft}

A particularly simple, yet effective, form of MCA is to let the first representation be the identity, \(\bA_1 = \bI\), and the second be the unitary Discrete Fourier Transform, \(\bA_2 = \bF\).  We refer to this as FFT MCA and in this case the solution to \eqref{eq:mca-bp} or \eqref{eq:mca-bpd} splits a signal into two components, with the former sparse in the time domain and the latter sparse in the frequency domain.  A consequence of Fourier duality is the \(\by_1\) component tends to be made up of broadband, short-duration elements while \(\by_2\) tends to be made up of long-duration elements with a narrower spectrum.

A classic example is to consider the superposition of a spike on a sinusoid.  Suppose we have \(N=1000\) samples with \(f_s = 10\)kHz and define \(\by\) to be
\linenomath
\[
y[n] = \delta_{50}[n] + \sin\left(2\pi \frac{1000}{f_s} n\right)
\]
where \(\boldsymbol{\delta}_{50}\) is a one-hot vector at index 50.  If BPD is applied to \(\by\) using FFT MCA then Algorithm~\ref{alg:mca-bp} will converge to
\begin{align}
\label{eq:fftsig}
y_1[n] &= \delta_{50}[n], & y_2[n] &= \sin(2\pi 1000 n / f_s).
\end{align}
In this case MCA separates \(\by\) into its components exactly.  This would not be true if, for example, noise were added to \(\by\) or if \(\by\) was the superposition of a sinusoid and a rectangular window.  That is because noise and/or rectangular windows are not sparse with respect to either \(\bI\) or \(\bF\).  For a noisy signal, the correct approach would be to use BPD to reconstruct the signal without the noise component.  For the rectangular window, exact separation is easier using a different set of representations.  As FFT MCA is signal agnostic and has no parameters, there isn't any way to alter the representations to fit a particular signal model.

\subsection{ESP MCA}
\label{sec:theory-esp}

A more flexible set of representations are given by Enveloped Sinusoid Parseval (ESP) frames, a class of representations formed from enveloped and shifted sinusoids.  Formal derivation of ESP frame theory is presented in Appendix~\ref{apx:esp}. Briefly, given a set of non-zero (but potentially complex) envelopes \(\{\be_l\}_{l=0}^{L-1}\subset\mathbb{C}^{N}\) the vectors \(\{\ba_{l,k,m}\}\) defined by
\[
a_{l,k,m}[n] = e_l[n-m \bmod N]\exp(2\pi j k(n-m)/N)
\]
for \(l=0,\ldots, L-1\) and \(k,m,n=0,\ldots,N-1\) form a tight frame.  Here \(l\) is the envelope index, \(k\) is the frequency index, and \(m\) is the time shift index.  The synthesis operator \(\bA\) in this case is given by \(A[n,l,k,m] = a_{l,k,m}[n]\).  Since ESP frames can be made using nearly arbitrary envelopes, a wide range of functions can be sparsely represented including exponentially decaying sinusoids, sinusoids with traditional windows, or modulated complex signals. However, as there are \(N^2L\) frame vectors, ESP frames are massively overdetermined.  One of the advantages of SALSA is that it is not necessary to work with the synthesis and analysis matrices directly.  Instead Appendix~\ref{apx:esp} describes FFT diagonalization which can be used to speed up Algorithm~\ref{alg:mca-bp}.

The goal when applying ESP frames to MCA is to find two sets of envelopes \(\be_l^i\), where the superscript indicates the component index and the subscript the envelope index, such that the signal components \(\by_i\) are sparsely represented by one set of frame vectors but not the other.  In the ideal case, \(\by_i\) is actually equal to a frame vector for \(\bA_i\).  The specific choice of envelope is often informed by the physics associated to the signal in question.  In this case, we wish to separate the long-duration high-\(Q\) signal components from the short-duration acoustic response of an elastic object.  As such, we will use decaying exponentials as envelopes for \(\bA_2\) since exponentially decaying sinusoids are an excellent signal model for long-duration ring down \cite{hambric2006}.  For \(\bA_1\) we will use extremely short rectangular windows since they flexibly capture short-duration signals.

As an aside, if \(\bA_1\) is generated using a single one-hot vector as an envelope, while \(\bA_2\) is generated using a single constant function as an envelope, then the resulting representations are extremely similar to the representations used in FFT MCA.  This mode of ESP MCA effectively generalizes FFT MCA, albeit not in strict mathematical terms.  For example, if this degenerate ESP frame and FFT MCA are both applied to the signal described in \eqref{eq:fftsig} using BP with 1000 iterations, then the relative difference in the resulting \(\by_1\) and \(\by_2\) components is 0.077\% and 0.879\%, respectively.

\subsubsection*{ESP MCA Example}

In order to illustrate the link between ESP frame envelopes and the underlying signal morphology we will perform ESP MCA separation using the following driven simple harmonic oscillator
\begin{equation}
\label{eq:ode}
\ddot{y} + \frac{2}{\tau} \dot{y} + 4\pi^2 f_0^2 y = \alpha \sin(2\pi f t)
\end{equation}
with \(\tau = 2\)ms, \(f = 15\)kHz, \(f_0 = 20\)kHz, and \(\alpha = 10^{10}\). Let \(y\) be the zero state solution (\(y(0) = \dot{y}(0) = 0\)). We generate \(\by\) using \(N=1000\) samples of \(y\) with a sampling frequency of \(f_s = 100\)kHz.  Based on what we know of the underlying dynamics, we can choose envelopes that will allow us to exactly capture the decomposition of \(\by\) into homogeneous and particular components.   Since the homogenous solution to \eqref{eq:ode} must consist of exponentially decaying sinusoids we can let \(\bA_1\) be the ESP frame associated to a single exponentially decaying envelope \(e^1_1[n] = \exp(-n/(\tau f_s))\) and expect \(\bA_1\) to capture the homogeneous part of \(\by\).  Similarly the particular solution to \eqref{eq:ode} must be a sinusoid so we define \(\bA_2\) to be the ESP frame associated to the single constant envelope \(\be^2_1 = 1\) and expect \(\bA_2\) to capture the particular component.  If we then perform MCA BP with 1000 iterations we get the separated signals shown in Figure~\ref{fig:ode}. As we used knowledge of the ODE dynamics to define ESP frames such that \(\by_i\) is sparsely represented by \(\bA_i\), the BP algorithm converges to the exact separation of \(\by\) into its homogeneous and particular parts with the relative error in this case equal to 0.12\% for the homogeneous solution \(\by_1\) and 0.05\% for the particular solution \(\by_2\). Notably the FFT MCA approach would not be able to achieve the exact separation presented here because the homogeneous solution is not sparsely representable using the FFT.

\begin{figure}
\centering
\includegraphics[width=3.25in]{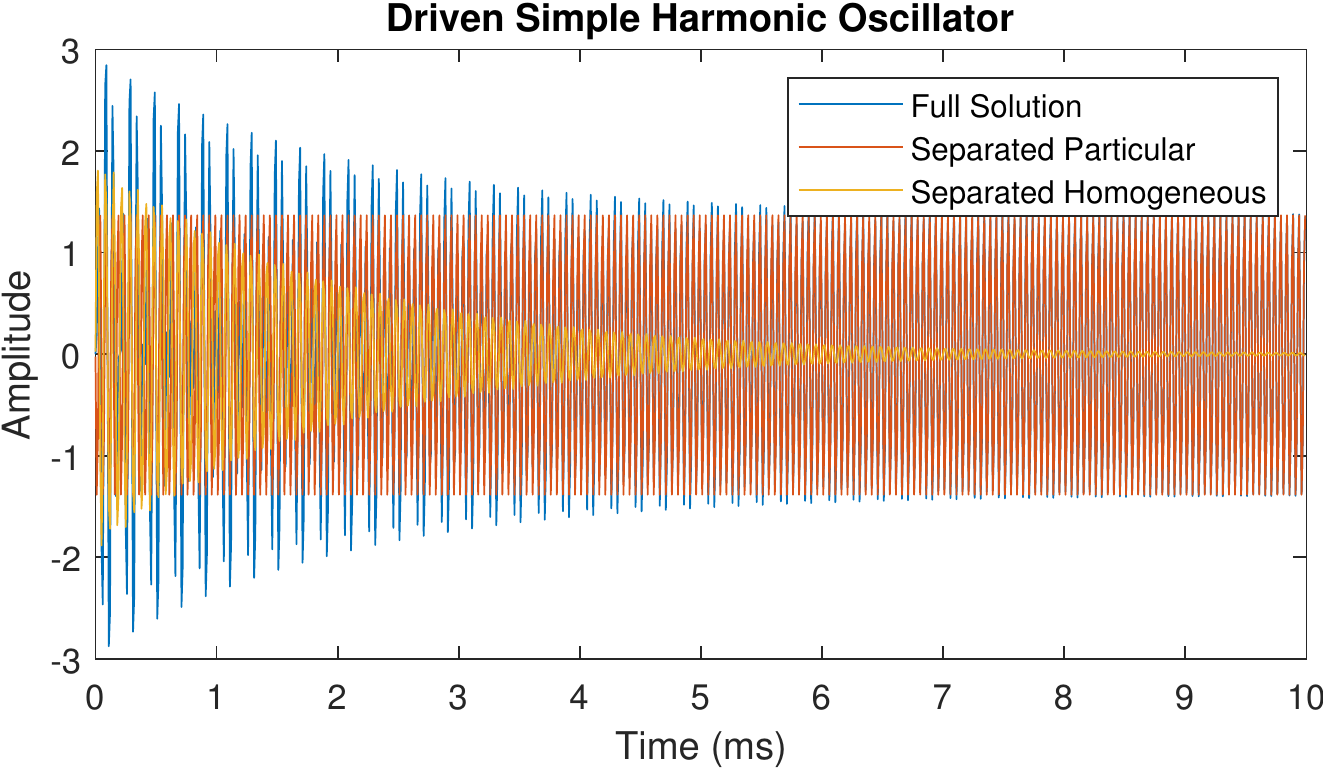}
\caption{(color online) ESP MCA separated homogeneous and particular components for the zero state solution to \eqref{eq:ode} using BP with 1000 iterations.  The relative error for the particular solution is 0.05\% and the homogeneous solution is 0.12\%.}
\label{fig:ode}
\end{figure}

\subsection{MCA of Acoustic Signals}
\label{sec:theory-reps}

We are generally interested in separating out the short-duration returns of an elastic object from the long-duration ones.  From a physical perspective, the short-duration returns include the initial return of the ping reflecting off the rigid geometry of the object, as well as low-\(Q\) elastic effects and additional short-duration late-time phenomena.  The long-duration returns primarily include the high-\(Q\) resonance modes of the object.  From a signal analysis perspective, however, the specific form of what constitutes a short-duration or long-duration component is ultimately defined by the MCA representations.
For FFT MCA the short-duration returns are represented using one-hot vectors, since \(\bA_1 = \bI\), while the long-duration returns are represented using sinusoids, since \(\bA_1 = \bF\).
ESP MCA will use a frame built from short rectangular windows to capture the short-duration components (representing them as very short windowed sinusoids) and a frame built from exponentially decaying envelopes to capture the long-duration components (representing them as exponentially decaying sinusoids).


\section{Analytic Signal Separation}
\label{sec:stanton}

In this section we will demonstrate the MCA approaches presented in Section~\ref{sec:theory} on an analytic acoustic signal produced by the Stanton elastic cylinder model \cite[Section B]{stanton2}. Section~\ref{sec:stanton-impulse} demonstrates separation applied to a clean impulse response from a Stanton elastic cylinder.  In Section~\ref{sec:stanton-lfm} we demonstrate the same separation on a noisy LFM response while in Section~\ref{sec:stanton-noise} we analyze the performance of the MCA techniques over a range of noise levels.

\subsection{Impulse Response Separation}
\label{sec:stanton-impulse}

The Stanton model parameters used in this paper were chosen to represent a solid aluminum cylinder in water with a diameter of 15.25cm and a length of 30.5cm.  The receiver is located 2m from the cylinder with a centered broadside orientation, and the signal is sampled at \(f_s = 300\)kHz.  Stanton's model provides the frequency representation of the cylinder's impulse response.  In order to minimize non-causal effects we apply a Butterworth filter of order 3 and threshold 0.25 to the synthesized time series.  This helps to remove spectral discontinuities and produces a more natural impulse.  The resulting time series, and corresponding spectrum, are shown in Figure~\ref{fig:stanton-impulse}.  The deep nulls at 15kHz, 23kHz and 30kHz are likely caused by low-\(Q\) surface wave elastic effects.  These effects are short duration, and may interfere constructively or destructively to the geometric scattering response.  The intention is for these effects to be included in the short-duration component.  The sharper, shallower nulls at 18kHz, 28kHz, 39kHz and 42kHz correspond to high-\(Q\) geometric resonance modes.  These are long duration signals and are one of our primary targets for the long-duration component.

\begin{figure}
\centering
\includegraphics[width=3.25in]{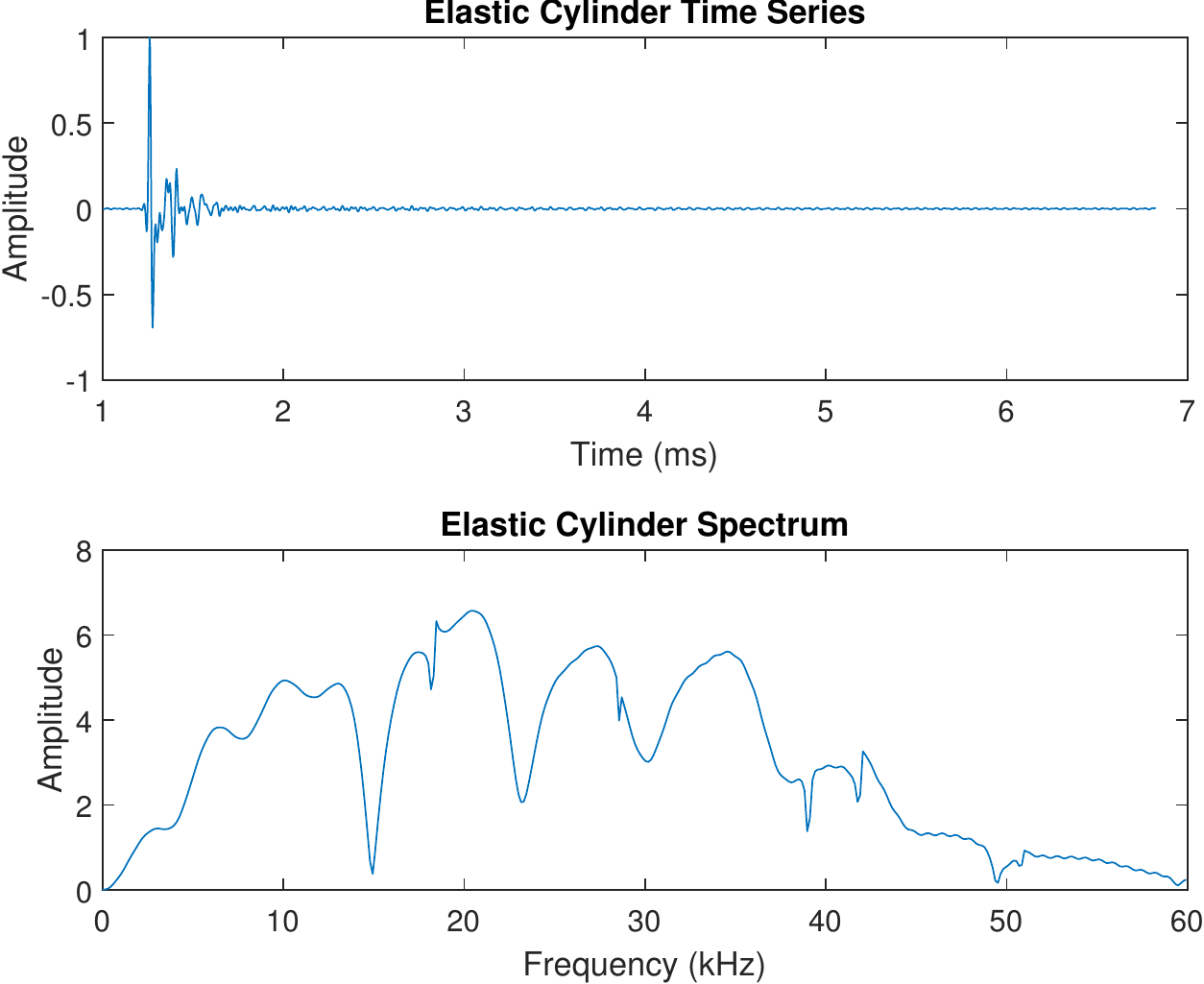}
\caption{(color online) Impulse response for the Stanton elastic cylinder (top) and corresponding spectral power (bottom).}
\label{fig:stanton-impulse}
\end{figure}

Importantly, since this model features a single return with no noise, the short/long-duration separation problem is largely equivalent to early/late-time separation.  Consequently will use the relative error between the original time series and the separated short-duration component in the early time, as well as the relative error between the time series and the long-duration component in the late time, as performance metrics for the MCA techniques.  Specifically we fix an early-time interval \(I_1\) lasting from 1ms to 2ms which includes the initial signal return, and a late time interval \(I_2\) lasting from 2ms to 6ms which only includes the late signal return.  Our metrics are then defined in terms of the standard \(\ell_2\)-norm by
\begin{align}
\label{eq:metrics}
m_1 &= \frac{\left\| \by|_{I_1} - \by_1|_{I_1} \right\|}{\left\| \by|_{I_1}\right\|}, &
m_2 &= \frac{\left\| \by|_{I_2} - \by_2|_{I_2} \right\|}{\left\| \by|_{I_2}\right\|}
\end{align}
where \(\by|_{I}\) indicates the restriction of \(\by\) to the interval \(I\). In the case where the separation is exact, we would expect \(m_2\) to be zero, since the late-time signal contains only long-duration components, but \(m_1\) to still be nonzero since the early-time signal does contain some long-duration energy.  This effect will be minor since the signals presented in this section are dominated by short-duration energy.

\subsubsection*{FFT MCA}

We begin by applying FFT MCA to the Stanton signal using BP with 1000 iterations.  After performing the separation, we get the results shown in Figure~\ref{fig:stanton-fft}. There we have plotted the original time series, the short-duration component, and the long-duration component in the early time, in the late time, and in the frequency domain.  The results are quite good.  FFT MCA correctly separates the loud initial response into the short-duration component while the signal tail is entirely separated into the long-duration component.   Quantitatively, the short-duration component has an early-time error of 4.44\% while the long-duration component has a late-time error 3.70\%.  The behavior of the spectrum is particularly interesting.  The bulk of the spectral power for the impulse response is separated into the short-duration signal, including the wide nulls caused by the low-\(Q\) elastic responses.  The sharp short high-\(Q\) nulls however have been turned into distinct spikes in the spectrum of the long duration component.  This has implications for feature detection as narrow peaks in the frequency domain are easier to detect and more resilient to noise than narrow nulls.

\begin{figure}
\centering
\includegraphics[width=3.25in]{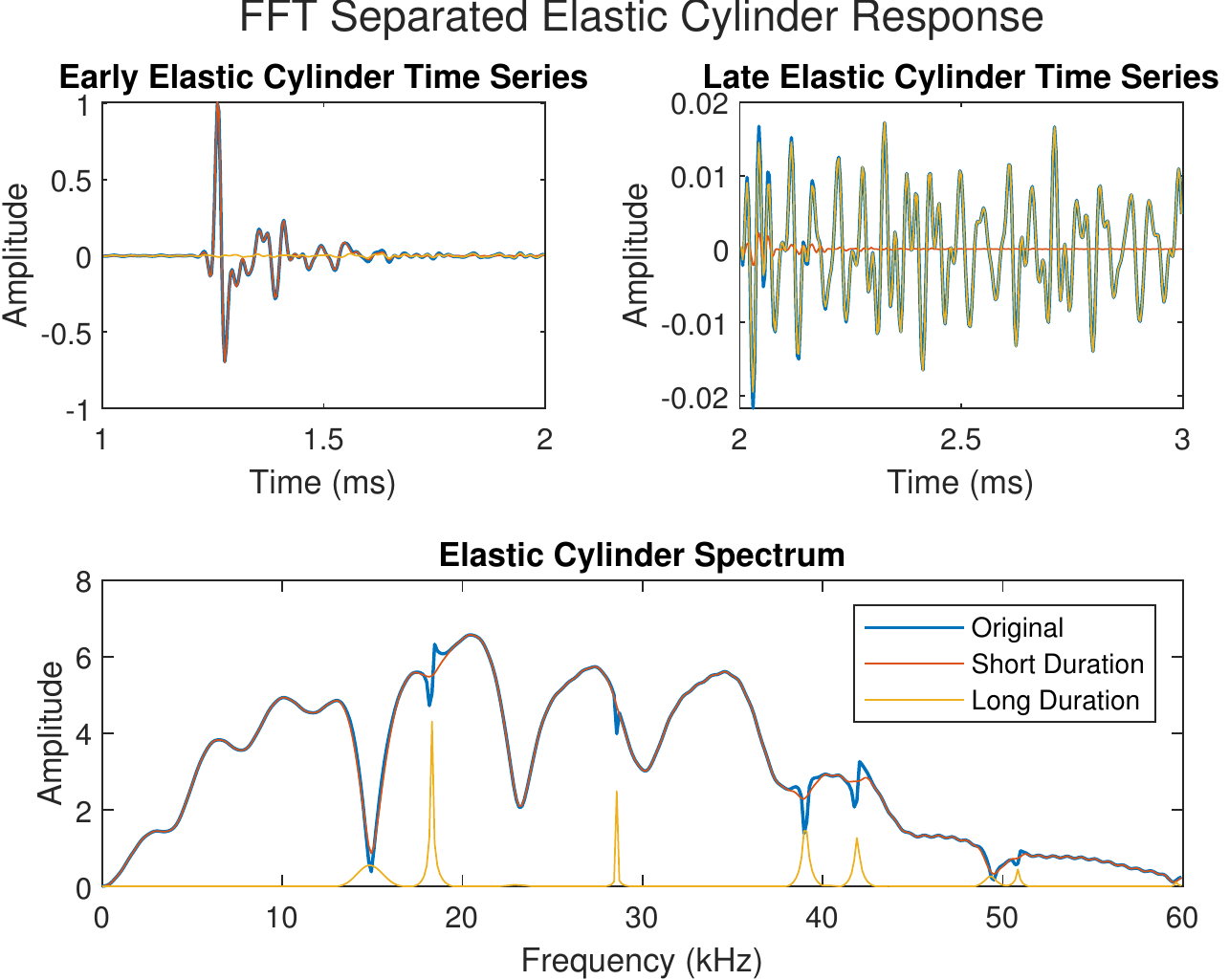}
\caption{(color online) Impulse response as well as the FFT separated short and long duration components for the elastic cylinder (top) and corresponding spectral power (bottom).  The short-duration component has an early-time relative error of 4.44\% and the long-duration component has a late-time relative error of 3.70\%.}
\label{fig:stanton-fft}
\end{figure}

\subsubsection*{ESP MCA}

As discussed in Section~\ref{sec:theory-esp}, for ESP separation we are using two categories of envelopes.  In this section \(\bA_1\) will be formed using rectangular windows
\[
e^1_l(t) = \begin{cases} 1 & 0 < t < T_l \\ 0 & \text{otherwise.} \end{cases}
\]
The representation \(\bA_2\) will be formed using exponentially decaying envelopes
\[
e^2_l(t) = \exp(-t/\tau_l).
\]
We will use logarithmically spaced window lengths and time constants given by
\begin{align}
\label{eq:esp-param}
T_l &= 0.27, 0.54, 0.1\text{ms}, \\
\tau_l &= 1.78, 3.16, 5.62, 10.00, 17.78, 31.62\text{ms} \nonumber
\end{align}
Several different factors were considered in the selection of these parameters:
\begin{itemize}
\item The longest window length, 0.1ms, is significantly shorter than the shortest time constant, 1.78ms.  This ensures the atoms are morphologically distinct and encourages better separation.
\item The shortest window length, 0.27ms, is long enough to support a significant number of oscillations in the frequency ranges of interest.
\item The largest time constant, 31.62ms, is big enough to support envelopes which decay very little over the length of the signal.
\item The shortest time constant, 1.78ms, still produces atoms which would be considered long-duration.
\end{itemize}

\begin{figure}
\centering
\includegraphics[width=3.25in]{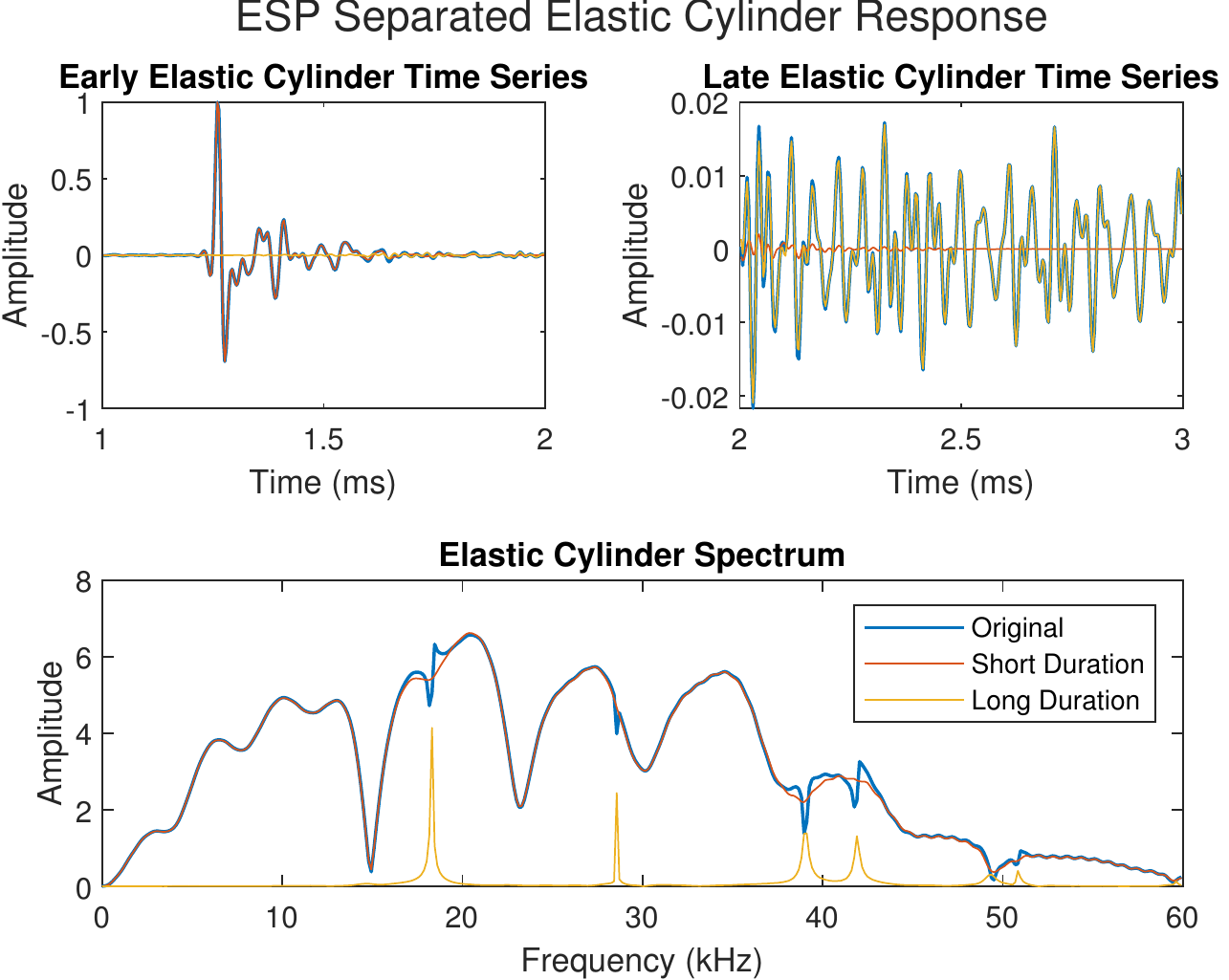}
\caption{(color online) Impulse response as well as the ESP separated short and long duration components for the elastic cylinder (top) and corresponding spectral power (bottom). The short-duration component has an early-time relative error of 4.04\% and the long-duration component has a late-time relative error of 3.67\%.}
\label{fig:stanton-esp}
\end{figure}

The results of ESP MCA utilizing 1000 iterations of BP are shown in Figure~\ref{fig:stanton-esp}.  For this time series, ESP MCA closely mirrors FFT MCA, and is nearly as effective at separating the signal as in the toy ODE case.  We see that the short duration component captures almost all of the initial response, and most of the spectral power, with an early-time error of 4.04\%.  The long-duration component on the other hand captures the entire late-time response, with a late-time error of 3.67\%, and as a result has some clear peaks at the high-\(Q\) null locations.  Comparing the performance of both methods in Table~\ref{table:stanton} we see that FFT and ESP MCA perform about the same in terms of relative error, which is confirmed by a visual inspection of the separated components.  

\begin{table}
\centering
\begin{tabular}{c | c c }
~ & \multicolumn{2}{c}{Relative Error} \\
Method & \(m_1\)  & \(m_2\) \\ \hline
FFT MCA & 4.44\% & 3.70\%  \\
ESP MCA & {\bf 4.04\%} & {\bf 3.67\%}
\end{tabular}
\caption{Short-duration early-time relative error \(m_1\) and long-duration late-time relative error \(m_2\) for FFT MCA and ESP MCA for the noise free impulse response of the Stanton elastic cylinder.}
\label{table:stanton}
\end{table}

\subsection{Noisy LFM Response Separation}
\label{sec:stanton-lfm}

In order to understand the impact of noise on our MCA techniques, and to represent more realistic signal processing, we will now use an LFM response from the same Stanton cylinder model to produce a noisy, matched-filtered, time series.  The clean LFM scattering response was generated by convolving an LFM which sweeps from 15kHz to 45kHz over 1ms with the Stanton model impulse response presented in Section~\ref{sec:stanton-impulse}, again using a 300kHz sampling frequency.  We also use the same Butterworth filter as the previous section, but because the LFM already effectively band pass filters the signal the effect is minimal.  We generate a noisy LFM return by adding white Gaussian noise at a 10dB SNR measured against the average signal power.  Lastly, matched filtering is applied to produce the final clean and noisy time series shown in Figure~\ref{fig:stanton-lfm}.  The added noise is more apparent in the spectrum where it mostly obscures the sharp high-\(Q\) resonance nulls (18kHz and 28kHz especially).

\begin{figure}
\centering
\includegraphics[width=3.25in]{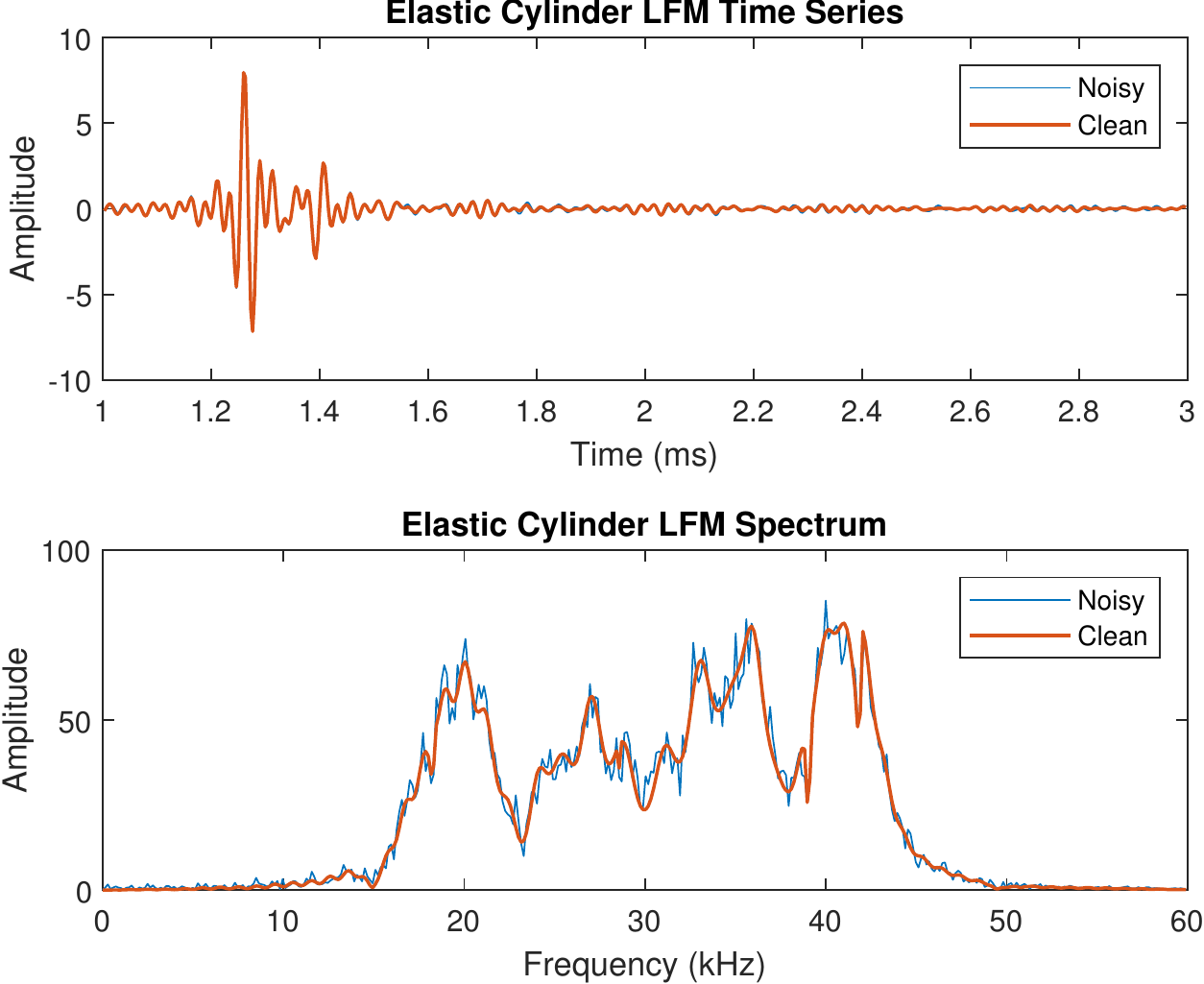}
\caption{(color online) Noisy and clean matched filtered LFM response for the elastic cylinder (top) and corresponding spectral power (bottom).}
\label{fig:stanton-lfm}
\end{figure}

In this section we will again use the metrics defined in \ref{eq:metrics}, but measured against the clean signal.  Notably the early-time portion of the LFM return is approximately 20dB louder than the late-time portion.  As the SNR of the noisy signal is measured relative to the average signal power, which is dominated by the high power in the early return, the noise is much louder when compared to the quieter late-time return.  Specifically, the post-matched filtered SNR of the noisy time series in the early-time is 25dB (where it is benefiting the most from filtering) while the post-matched filtered SNR of the noisy time series in the late-time is -1dB.

\subsubsection*{FFT MCA}

As our signal contains broadband noise we will utilize MCA BPD, which allows for some amount of reconstruction error in order to reduce the amount of noise in the separated signals.  Using FFT MCA and BPD with 1000 iterations and \(\lambda = 0.01\lambda_{\max}\) we have the results in shown Figure~\ref{fig:stanton-noise-fft}.  The top plots of Figure~\ref{fig:stanton-noise-fft} compare the noisy signal, the clean signal, and the separated short-duration and long-duration components in both the early-time and the late-time while the bottom plot shows the associated spectra.  The addition of noise, and the use of an LFM, has had a large impact on the quality of the MCA separation.  The early-time error for the short-duration component is 17.6\%, and the late-time error for the long-duration component is 75.4\%.  While the short-duration component looks good in the time domain it is clear from the frequency domain that a significant amount of spectral information has been lost.  The long-duration spectrum is also significantly muddled. While the 18kHz peak is still visible, most of the other expected resonance peaks are lost in the noise.

\begin{figure}
\centering
\includegraphics[width=3.25in]{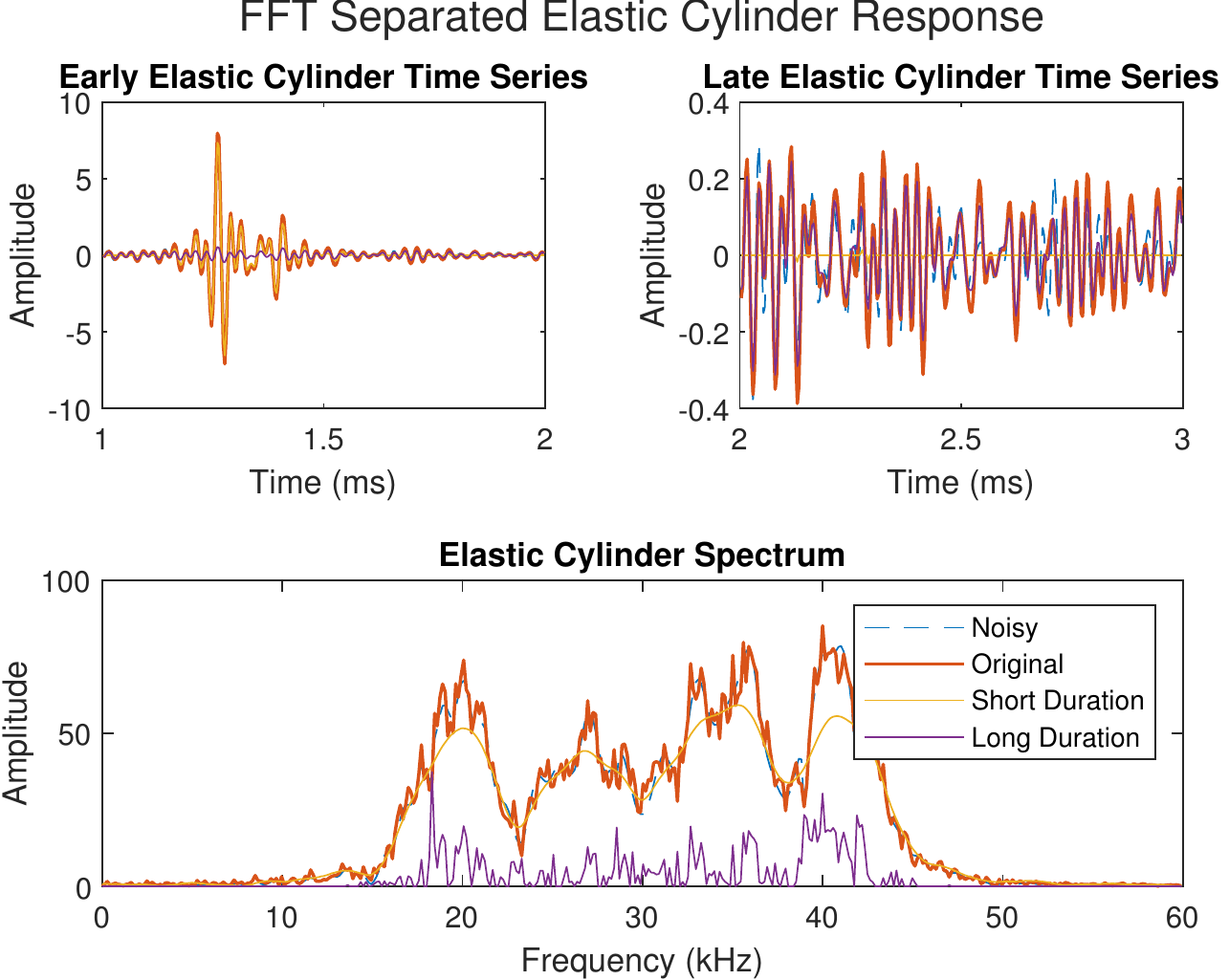}
\caption{(color online) Clean matched filtered LFM response, noisy matched filtered LFM response and the FFT separated short and long duration components for the elastic cylinder (top) and corresponding spectral power (bottom).  Noise was added at a 10dB SNR level.  The short-duration component has an early-time relative error of 17.6\% and the long-duration component has a late-time relative error of 75.4\%.}
\label{fig:stanton-noise-fft}
\end{figure}

\subsubsection*{ESP MCA}

Next we separate the noisy LFM signal using ESP MCA.  While we still use rectangular and exponentially decaying envelopes we will use window lengths and time constants of
\begin{align*}
T_l &= 0.07, 0.27, 0.54\text{ms}, \\
\tau_l &= 3.16, 5.62, 10.00, 17.78, 31.62\text{ms}.
\end{align*}
The main difference between the envelope parameters for this section and Section~\ref{sec:stanton-impulse} is that we have removed the 0.1ms rectangular window, added a 0.07s window, and dropped the 1.78s time constant.  This change is a heuristic choice driven by the characteristics of the LFM signal.  Increasing the gap between the longest rectangular window and the smallest exponential time constant decreases the ``overlap'' between the two ESP frames and tends to make the separation more robust to noise at the cost of performance in the noise-free case.  While this illustrates the flexibility of ESP frames, establishing a formal approach for determining optimal envelope parameters is an open question.

If we perform ESP MCA using BPD with 1000 iterations and \(\lambda = 0.01\lambda_{\text{max}}\) we get the results in Figure~\ref{fig:stanton-noise-esp}.  We can see visually that while the separation is still largely successful, particularly in the time domain, the noise and LFM have impacted the clarity of the spectral peaks in the resonant response.  The short-duration early-time error is 11.1\%.  The spectrum of the short-duration component is a better fit to the LFM spectrum than the FFT MCA short-duration component, but is still lacking definition.  The late-time long-duration error is still a relatively high 57.3\%, but represents a significant improvement over the FFT MCA.  Additionally, the important resonant peaks are clearer with the ESP approach, especially the ones at 18kHz and 42kHz.

\begin{figure}
\centering
\includegraphics[width=3.25in]{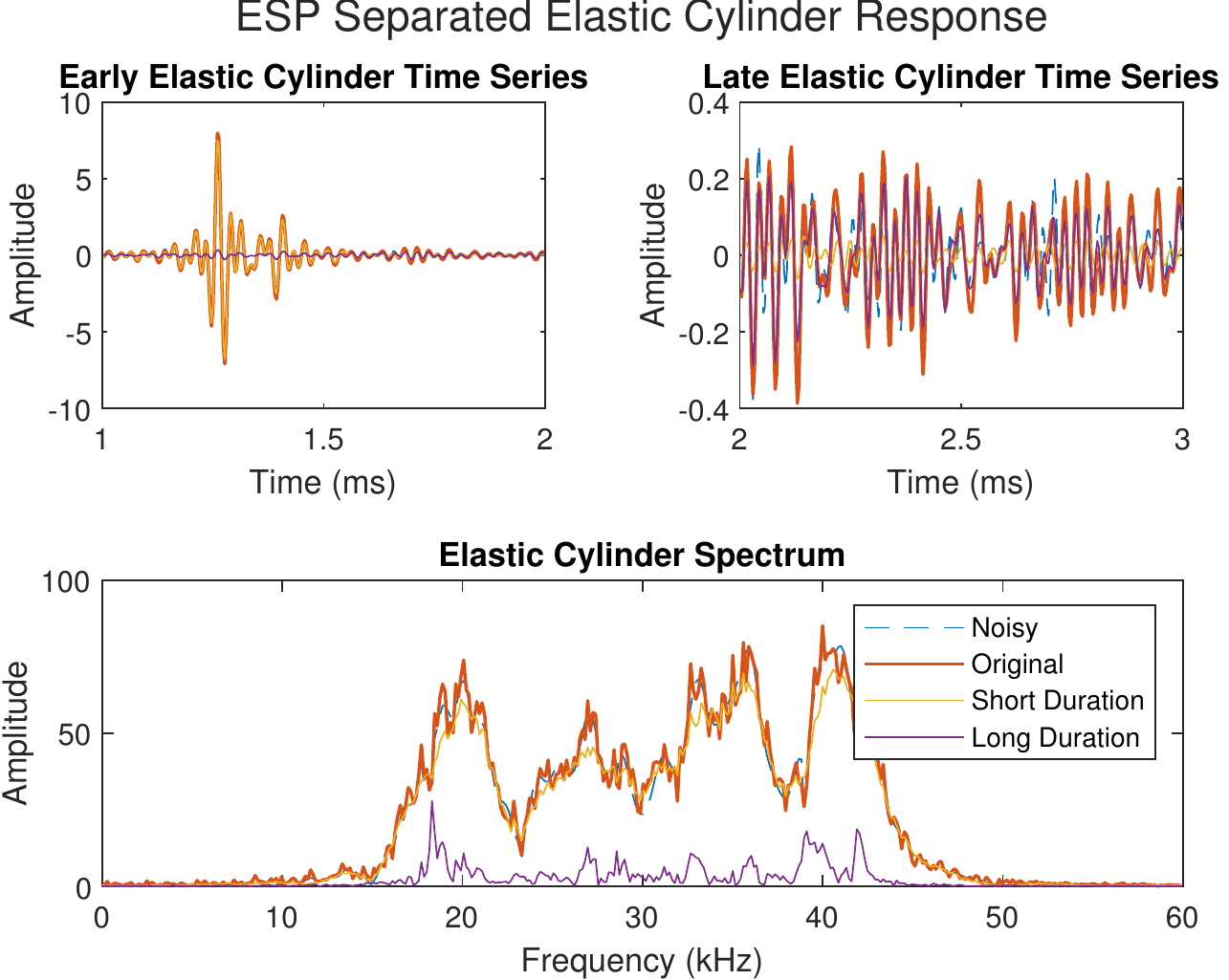}
\caption{(color online) Clean matched filtered LFM response, noisy matched filtered LFM response and the ESP separated short and long duration components for the elastic cylinder (top) and corresponding spectral power (bottom).  Noise was added at a 10dB SNR level.  The short-duration component has an early-time relative error of 11.1\% and the long-duration component has a late-time relative error of 57.3\%.}
\label{fig:stanton-noise-esp}
\end{figure}

 Table~\ref{table:stanton-noise} summarizes the results of this section. We see that both methods have a large late-time, long-duration component error.  This is because the added noise was much louder compared to the quiet ring down, which impacts the performance of BPD.  It is also worth noting that  one side effect of BPD is the overall reduction of signal power and most of the signal power lives in the early time.  This is an important contributor to the early-time short-duration error. ESP MCA is the best performer of the methods, with significantly lower error rates, particularly in the late-time.  This is not unexpected since ESP MCA was designed to fit the underlying signal while FFT MCA is signal agnostic.  
 Of course, these results are for a particular noise level and common choice of \(\lambda\).  The overall performance of an individual approach can be maximized by tuning \(\lambda\) as is done in the next section.

\begin{table}
\centering
\begin{tabular}{c | c c}
~ & \multicolumn{2}{c}{Relative Error} \\
Method & \(m_1\)  & \(m_2\) \\ \hline
FFT MCA  & 17.6\% & 75.4\% \\
ESP MCA  & {\bf 11.1\%} & {\bf 57.3\%}
\end{tabular}
\caption{Short-duration early-time relative error \(m_1\) and long-duration late-time relative error \(m_2\) for FFT MCA and ESP MCA for the noisy elastic cylinder.}
\label{table:stanton-noise}
\end{table}

\subsection{Noise Analysis}
\label{sec:stanton-noise}

In order to compare optimal MCA separation between the two techniques, experiments were performed using a range of noise levels and \(\lambda\)-values.  Gaussian noise was added to the LFM signal at SNR ranging from \(-10\)dB to 30dB with the same signal processing as described in Section~\ref{sec:stanton-lfm}.  For each SNR, 1000 noise realizations were instantiated and each was separated using FFT and ESP MCA with 1000 iterations of BPD for each of the following \(\lambda\)-values
\begin{equation}
\label{eq:lambdas}
\lambda_j = \lambda_{\max}10^{-3 + 0.25j}\ \text{for \(j = 0, \ldots, 11\).}
\end{equation}
The early-time relative error in the short-duration component and the late-time relative error in the long-duration component was computed for each combination of SNR, noise realization, and \(\lambda\)-value.  The means and standard deviations of those errors are plotted, for those \(\lambda_j\) which give the minimum error, in Figure~\ref{fig:stanton-noise-snr} as a function of SNR.

\begin{figure}
\centering
\includegraphics[width=3.25in]{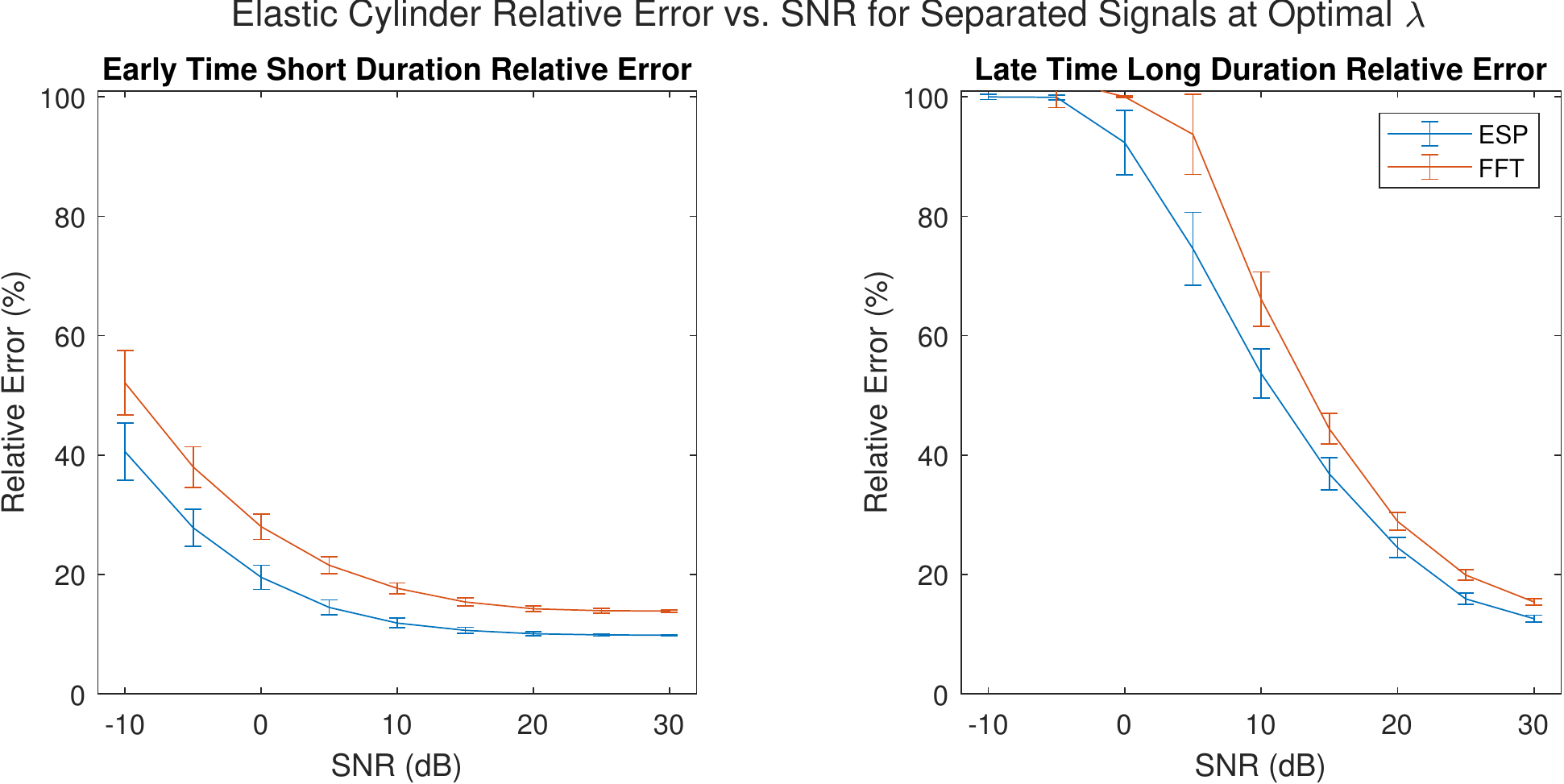}
\caption{(color online) Early-time short-duration component relative error (left) and late-time long-duration component relative error (right) for the elastic cylinder time series.  Data points represents mean error averaged over 1000 noise realizations and the bars represent the standard deviation.  The optimal \(\lambda\) was chosen from logarithmically selected \(\lambda\) ranging over \([0.001\lambda_{\max}, 0.5623\lambda_{\max}]\) using 1000 iterations of BPD.}
\label{fig:stanton-noise-snr}
\end{figure}

The ESP based separation performs better than the FFT based separation.  The short-duration errors in the left plot of Figure~\ref{fig:stanton-noise-snr} display typical behavior with larger relative errors and standard deviations for lower SNR.  None of the relative errors approach the 100\% mark, which is an indication that even at the highest noise levels there is some amount of signal in the separated component.  The short-duration component error is smallest for ESP MCA over all SNR levels.

For the long-duration component, at low SNR the relative error is either above 100\% with a large standard deviation, indicating that the long-duration component is picking up loud noise, or is very near 100\% with little deviation, indicating that the long-duration component is near zero.  In either case, any relative errors at or above 100\% should be considered degenerate.  We see more reasonable behavior above 5dB SNR with ESP MCA slightly outperforming FFT MCA.  Overall ESP MCA does a slightly better job of separation in the face of noise.  The FFT MCA performs reasonably well, considering it is signal agnostic, and is computationally less intensive.

Overall this section demonstrates that MCA can be utilized to separate short-duration and long-duration signal components, at least in the case of signals derived from analytic models.
While the separation was clearest in the case of a clean impulse response, reasonable results were obtained for the LFM response with noise as well.  For both scenarios, sharp narrow nulls in the signal spectrum were transformed into peaks in the spectrum of the late-time component, providing a more robust feature for detection and classification.  Successful separation was performed with SNR as low as 10dB (\(-10\)dB as measured against the late time component), a plausible noise regime given current sensors.

\section{AirSAS Signal Separation}
\label{sec:airsas}

In this section we will apply the MCA techniques presented in Section~\ref{sec:theory} to experimentally generated AirSAS time series \cite{airsas}.  Experimental AirSAS data was collected on two targets: an 8-inch long, 2-inch diameter copper pipe with 0.032-inch thick walls, and an 8-inch long, 2-inch diameter air-filled, hollow copper cylinder with 0.032-inch thick walls and end caps.  The targets were centered on a turntable and rotated in 1 degree increments relative to a transducer array consisting of loudspeaker tweeter (Peerless OX20SC02-04) and a microphone (GRAS 46AM), see Figure~\ref{fig:airsas}.  The tweeter transmits a 1ms LFM chirp from 30kHz to 10kHz and the microphone receives the signals backscattered from the target.  Motion, timing, signal generation and capture is controlled from a National Instruments data acquisition platform.  The recorded signals are matched filtered with the transmitted waveform. For this paper we only utilize the 3ms to 8ms portion of each time series.  This window captures the early-time response at all angles as well as much of the late-time response before the room clouds the data.

We apply FFT and ESP MCA to the resulting dataset for the 0.032-inch hollow copper cylinder object in Section~\ref{sec:airsas-ch} with a noise analysis on the same dataset in Section~\ref{sec:airsas-noise}. We finish with an application of FFT and ESP MCA to the more complicated time series collected from the 0.032-inch copper pipe in Section~\ref{sec:airsas-cp}.

\begin{figure}
\centering
\includegraphics[width=3in, angle=-90]{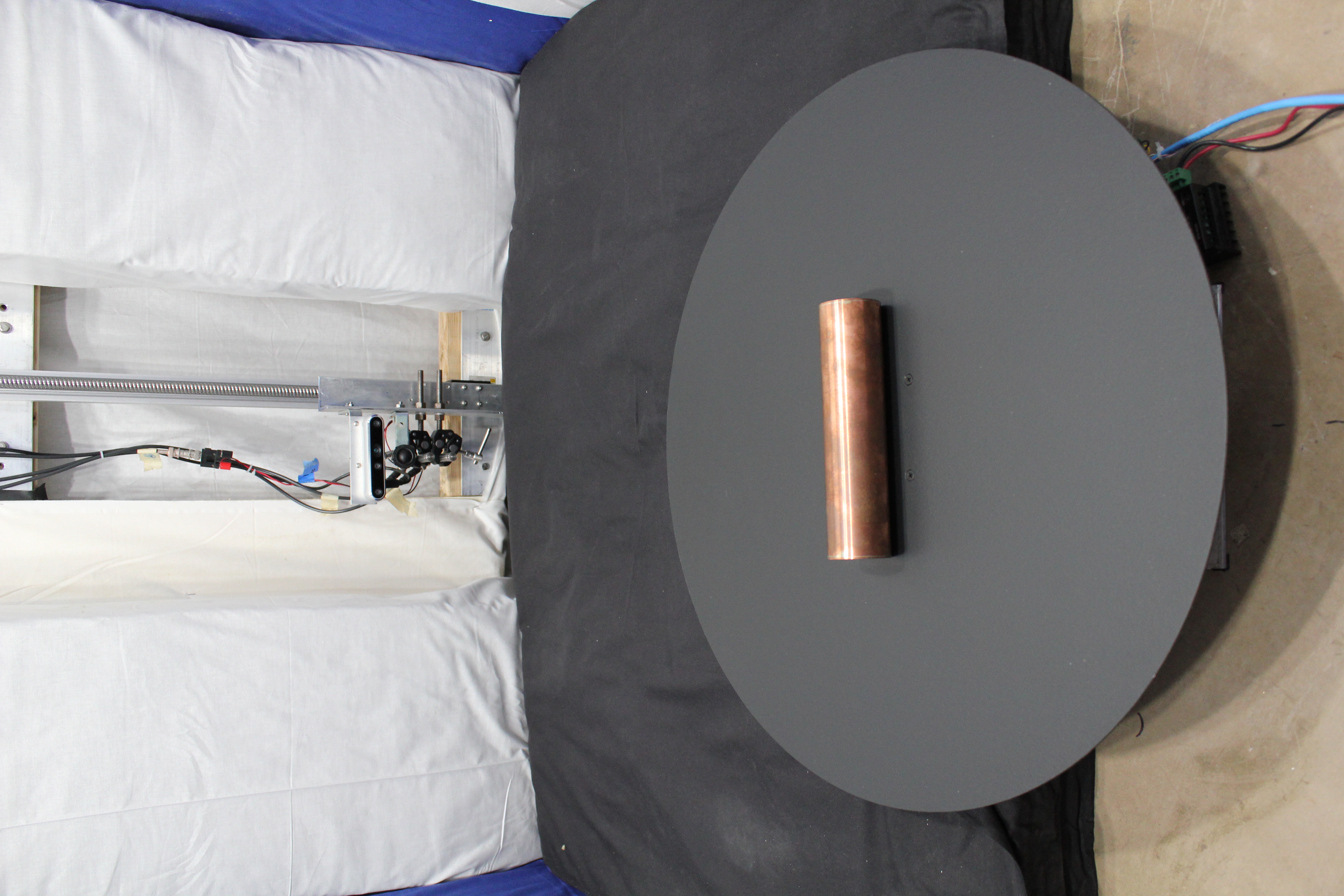}
\caption{(color online) The AirSAS experimental setup was used to collect data on 8-inch long, 2-inch diameter targets. A loudspeaker ensonifies the objects with an LFM chirp, and the backscattered echoes are recorded by a microphone.  A turntable rotates the targets relative to the transducers between successive pings.}
\label{fig:airsas}
\end{figure}

Despite the fact that the AirSAS data is much more complex than the analytic signal, with multiple returns arriving at different times, for this section we will continue use the performance metrics from \eqref{eq:metrics}.  However for the AirSAS cylinder data we use an early time-interval \(I_1\) from 4ms to 6ms and a late-time interval \(I_2\) from 6ms to 8ms.  Since each AirSAS object scan includes 360 different time series, we will report the relative error averaged over all aspect angles, which introduces some variation into the error.  While we will still view these metrics as a measure of separation performance, the fact that we expect there to be late-time short-duration energy (particularly in Section~\ref{sec:airsas-cp}) means that even in the case of perfect separation we would not expect either \(m_1\) or \(m_2\) to be zero. More broadly these metrics provide only a rough indication of overall performance.

\subsection{0.032-inch Hollow Copper Cylinder}
\label{sec:airsas-ch}

The first dataset we will consider are the AirSAS time series collected from the 0.032-inch hollow copper cylinder.  In many respects, these time series are a reasonable analogue to the Stanton model used in Section~\ref{sec:stanton}, since at most aspect angles there is a single bright initial return potentially followed by a long-duration low power component.  Notably thin walled pipes support a wider class of non-rigid phenomenon than Stanton's model.  Various representations of the hollow copper cylinder experimental data are shown in Figure~\ref{fig:airsas-chbf}.  The top left subplot is a logarithmically scaled color plot of the time series amplitude. The bottom left subplot shows the associated normalized target strength, which is the spectra of each time series normalized across all aspect angles.  The top right subplot is a logarithmically scaled color plot of the Polar Format Algorithm (PFA) generated image magnitude \cite{doerry-pfa}.  The bottom right subplot shows the object's \(k\)-space representation, which is the magnitude of the two-dimensional Fourier Transform of the complex PFA image.  The long-duration signal is clearly present in the time series representation, in bands from \(-10\) degrees to 90 degrees and 180 degrees to 280 degrees.  This late time energy is also apparent in the PFA image.  Not readily apparent in either of the spectral representations is a faint signature corresponding to this late-time energy.

\begin{figure*}[h]
\centering
\includegraphics[width=4in]{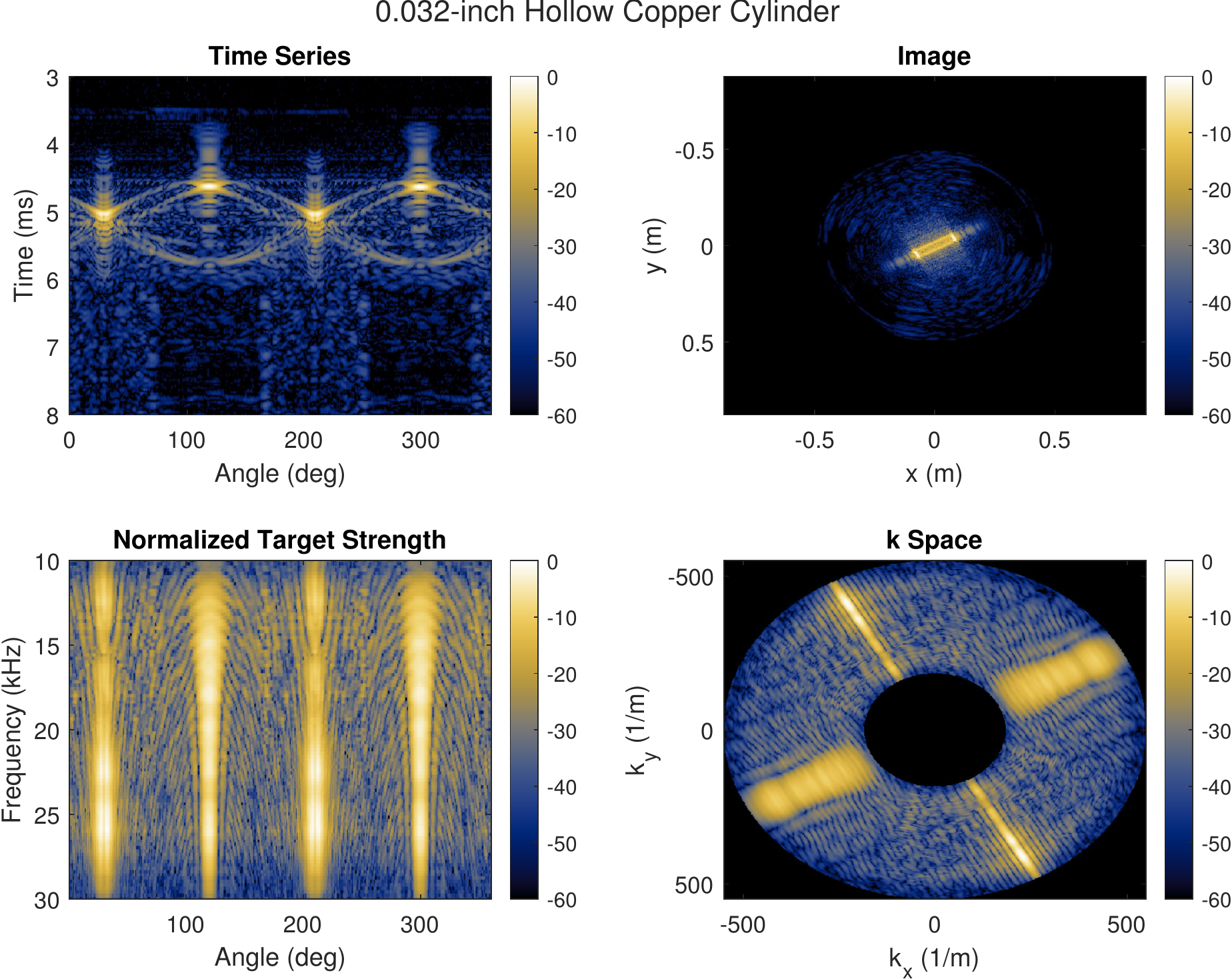}
\caption{(color online) Time series (top left), PFA image (top right), normalized target strength (bottom left), and \(k\)-space (bottom right) for the 0.032-inch hollow copper cylinder. All plots logarithmically scaled.}
\label{fig:airsas-chbf}
\end{figure*}

MCA is broadly compatible with the signal processing and image reconstruction algorithms used with the AirSAS data.  For this section, we will apply MCA to the matched filtered AirSAS time series individually, splitting each into  short-duration and long-duration components.  We then apply PFA to the separated time series to reconstruct a pair of images, one corresponding to the short-duration components and the other to the long-duration components.  We produce normalized target strength representations corresponding to the short-duration and long-duration components as well.

\subsubsection*{FFT MCA}

To begin, we will apply FFT MCA using 1000 iterations of BP to the 0.032-inch hollow copper cylinder as described above.  Since we are utilizing BP, the separated time series as well as the corresponding PFA images will add up exactly to the original dataset from Figure~\ref{fig:airsas-chbf}.  After image formation, the PFA images associated to the separated short-duration and long-duration components are shown in Figure~\ref{fig:airsas-fft-ch}, along with their normalized target strength representations, on a pairwise common color scale.  The separation looks fairly clean in the PFA image, with the extended ringing response from the cylinder principally in the long-duration image while the brighter geometric scattering response appears in the short-duration image.  There does appear to be some bleed-through of the object into the long-duration image.  The average short-duration early-time error is 44.9\% while the average long-duration late-time error is 23.7\%.  One particularly interesting set of features are the hyperbolic signatures present at 20 degrees and 210 degrees in the long-duration normalized target strength plot, since these signatures were masked by the much brighter short-duration response in the bottom-left plot of Figure~\ref{fig:airsas-chbf}.

\begin{figure*}[h]
\centering
\includegraphics[width=4in]{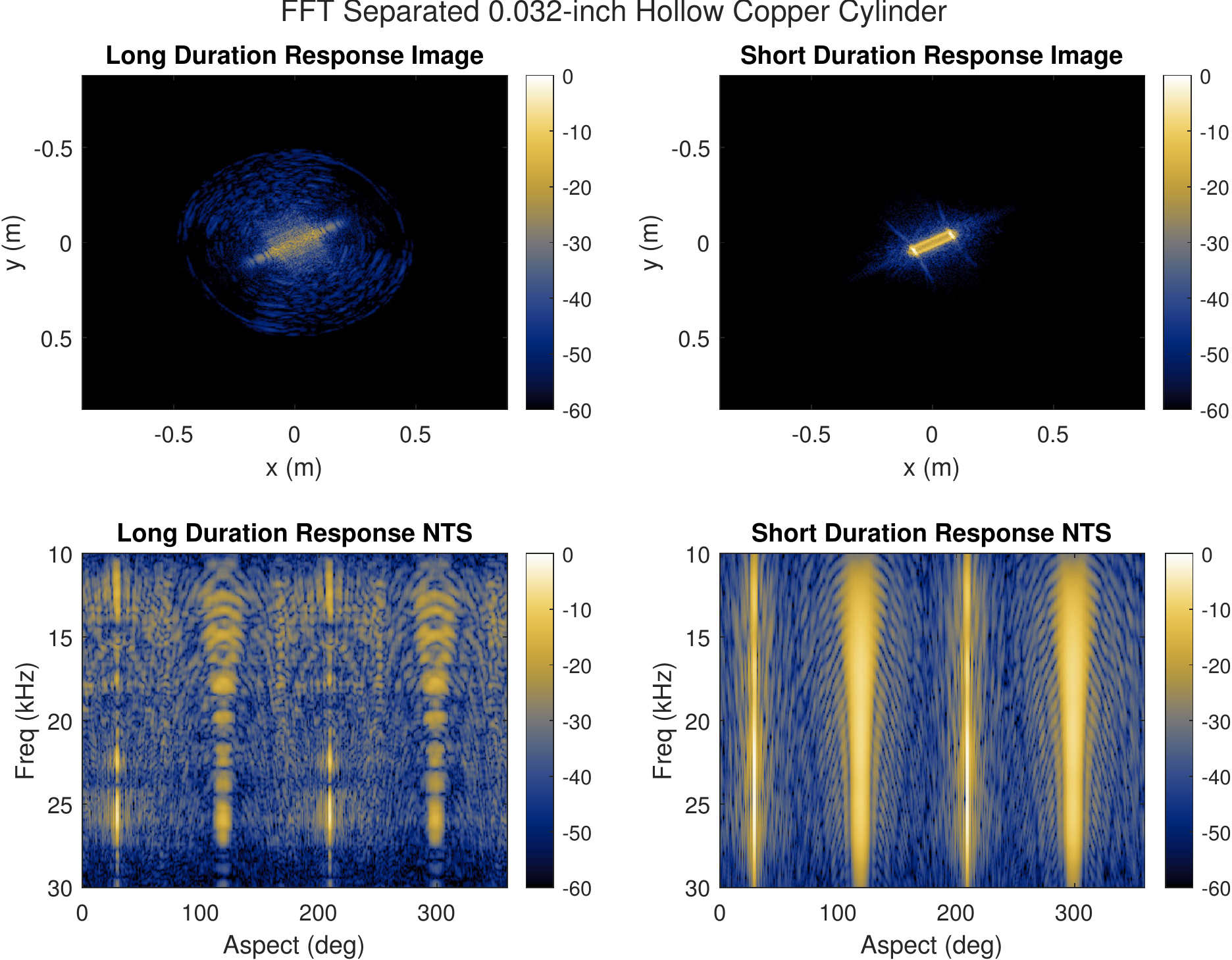}
\caption{(color online) PFA image (top) and normalized target strength (bottom) for the FFT separated long-duration component (left) and short-duration component (right) of the hollow copper cylinder object.  The plots share a pairwise common color scale.  The short-duration component has an average early-time relative error of 44.9\% and the long-duration component has an average late-time relative error of 23.7\%. }
\label{fig:airsas-fft-ch}
\end{figure*}

\subsubsection*{ESP MCA}

For ESP MCA we will use the same process as above, applying BP to individual time series and producing a pair of PFA images associated to each component.  We continue to use rectangular windows and decaying exponentials as our envelopes, with the same window lengths and time constants as \eqref{eq:esp-param}.  Heuristic experimentation showed these parameters produced reasonable results, although we will see that specific performance characteristics can be attained by using shorter windows and larger time constants. Using BP with 1000 iterations produces the PFA images shown in Figure~\ref{fig:airsas-esp-ch}.  The separation is quite effective with the bright initial scattering almost completely contained within the short-duration response with an average early-time short-duration error of 19.0\%.  The long-duration component has most, but not all, of the late-time response and has an average error of 49.0\%.  Interestingly, there is some late-time energy in the short-duration image at the acoustic coupling angles that was not present in the FFT MCA.  This energy does appear to take the form of late arriving wave packets and one theory is that these discrete returns are late arriving pulses associated to surface waves propagating on the cylinder. If this is indeed the case, then it is more appropriate for them to be part of the short-duration image. The fact that they are not present in the long-duration image negatively impacts the long-duration late-time relative error, which is consistent with our goal of separating the signal into early-time and late-time components.

\begin{figure*}[h]
\centering
\includegraphics[width=4in]{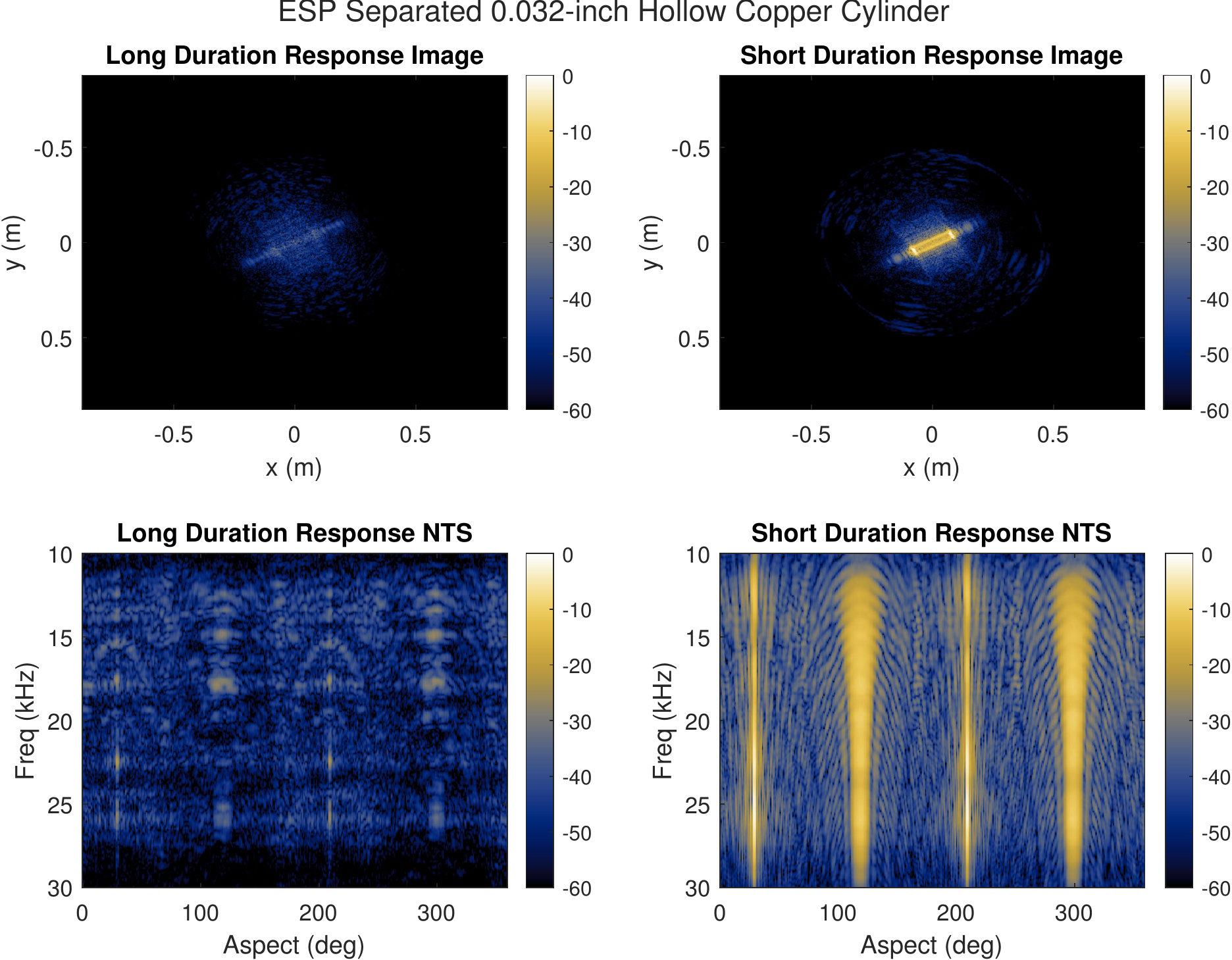}
\caption{(color online) PFA image (top) and normalized target strength (bottom) for the ESP separated long-duration component (left) and short-duration component (right) of the hollow copper cylinder object.  The plots share a pairwise common color scale. The short-duration component has an average early-time relative error of 19.0\% and the-long duration component has an average late-time relative error of 49.0\%.}
\label{fig:airsas-esp-ch}
\end{figure*}

Overall, compared to the FFT MCA separation we have sacrificed accuracy in the long-duration late-time for increased accuracy in the short-duration early-time, which we see in Table~\ref{table:airsas-ch}.  As we will demonstrate in Section~\ref{sec:airsas-cp} the ESP frame approach is flexible so this separation could be further tuned by changing the envelope parameters.  (Recall that using a one-hot and constant envelope will largely reproduce the FFT result.)  Moreover, there is evidence that not all of the short-duration components are early-time, which impacts the reliability of the early-time/late-time error metrics.  Visually the separation appears best with the ESP MCA approach as the hyperbolic late-time features are clearer and at a higher relative power.

\begin{table}
\centering
\begin{tabular}{c | c c }
~ & \multicolumn{2}{c}{Relative Error} \\
Method & \(m_1\)  & \(m_2\) \\ \hline
FFT MCA & 44.9\% & {\bf 23.7\%}  \\
ESP MCA & {\bf 19.0\%} & 49.0\% \\
\end{tabular}
\caption{Short-duration early-time relative error \(m_1\) and long-duration late-time relative error \(m_2\) for FFT MCA and ESP MCA for the copper hollow cylinder.}
\label{table:airsas-ch}
\end{table}

\subsection{Noise Analysis}
\label{sec:airsas-noise}

In order to understand FFT MCA and ESP MCA separation behavior over a broader range of noise levels, the analysis of Section~\ref{sec:stanton-noise} has been reproduced using AirSAS data.  Specifically Gaussian noise was added to the time series at each aspect angle with SNR ranging from 10dB to 40dB, all measured against a common reference power.  The individual time series were then separated using FFT MCA and ESP MCA using BPD with 1000 iterations and \(\lambda\)-values given by \eqref{eq:lambdas}. The early-time relative error in the short-duration component and the late-time relative error in the long-duration component was computed for each combination of noise level, and \(\lambda\)-value.  We then computed the mean and standard deviation of those errors over all aspect angles.  The mean and standard deviation are plotted in Figure~\ref{fig:airsas-noise-error-snr} as a function of SNR for those \(\lambda_j\) which give the minimum error.

\begin{figure}
\centering
\includegraphics[width=3.25in]{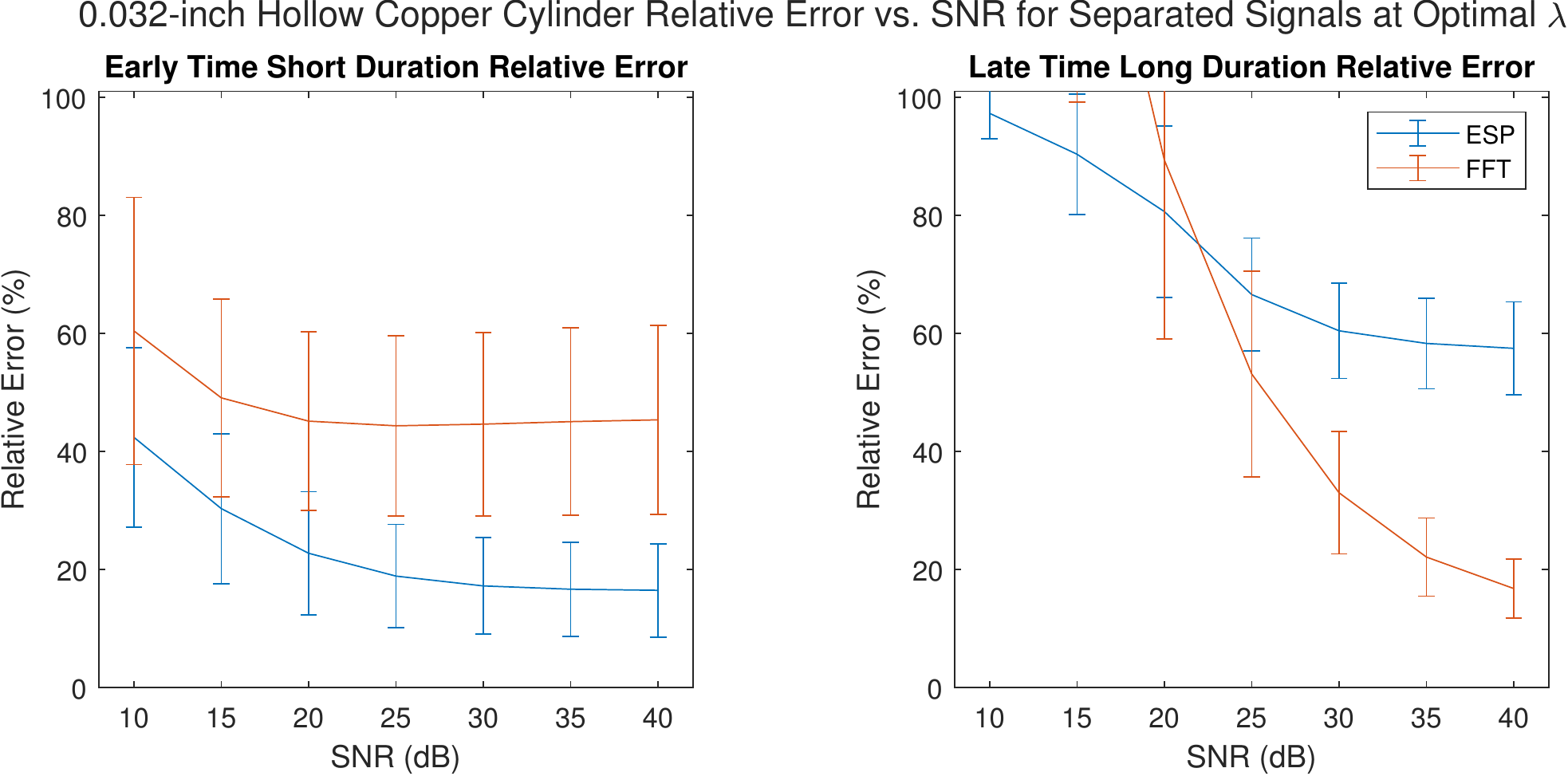}
\caption{(color online) Early-time short-duration component relative error (left) and late-time long-duration component relative error (right) for copper hollow cylinder AirSAS data.  Data points represents mean error averaged over aspect angle and the bars represent the standard deviation.  The optimal \(\lambda\) was chosen from logarithmically selected \(\lambda\) ranging over \([0.0001\lambda_{\max}, 0.5623\lambda_{\max}]\) using 1000 iterations of BPD.}
\label{fig:airsas-noise-error-snr}
\end{figure}

The results in Figure~\ref{fig:airsas-noise-error-snr} are consistent with the single sample results in the previous section.  First, observe that the standard deviations do not decrease as a function of SNR as they do in Section~\ref{sec:stanton-noise}.  This is because we are averaging over all 360 aspect angles, not averaging over different noise realizations on the same time series.  In terms of mean error, ESP MCA has a consistently lower error than FFT MCA for the early-time short-duration component.  Both errors tend to decrease as a function of SNR, as expected.

For the long-duration component the late-time relative error is near or above 100\% for SNR less than \(\approx 15\)dB for both methods, indicating a failure of separation. There is evidence that ESP MCA starts producing viable separations at 20dB, while both methods produce viable separations by 25dB.  However, the FFT MCA separation produces a lower late-time long-duration relative error for SNR greater than \(\approx 22\)dB.

More broadly we have demonstrated the ability to separate SAS imagery into distinct morphological components using both FFT MCA and ESP MCA.  The FFT MCA approach is superior at producing separations with lower late-time error while ESP MCA has consistently lower early-time error.  Importantly both MCA methods were designed to separate short-duration and long-duration components, rather than early-time/late-time components, and will associate the late-time short-duration energy in the AirSAS data with geometric scattering.  An open question is if differences in the morphology between early-time and short-duration late-time energy could be used to further the overall goal of early-time/late-time separation.

\subsection{0.032-inch Copper Pipe}
\label{sec:airsas-cp}

The second data set we will consider are the AirSAS time series collected from the 0.032-inch copper pipe.  This dataset is significantly more complex than the 0.032-inch hollow copper cylinder dataset, with obvious short-duration late-time energy present in the time-series.  This will exercise the MCA approach to signal separation by demonstrating separation of a long duration signal superimposed on a sequence of repeated short duration signals, but will further decrease the utility of our performance metrics.  Figure~\ref{fig:airsas-chbf} shows logarithmically scaled color plots of the 0.032-inch copper pipe time series data, the associated normalized target strength, as well as the PFA image and its associated \(k\)-space representation.  The aforementioned late-time short-duration energy is present in discrete ``rings'' around the object from \(-10\) degrees to 90 degrees and 180 degrees to 280 degrees.  We wish to understand how our MCA tools respond to this energy.

\begin{figure*}[h]
\centering
\includegraphics[width=4in]{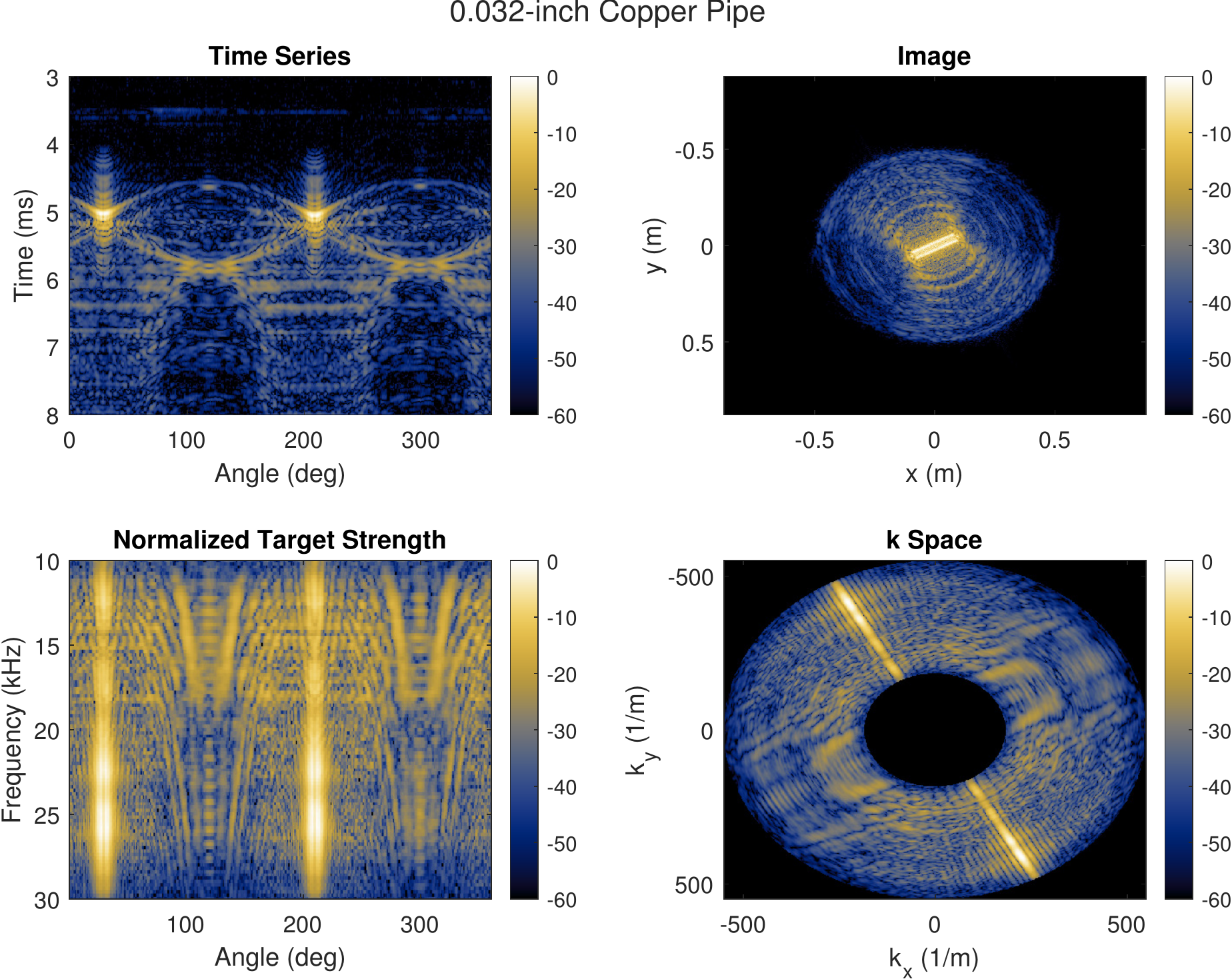}
\caption{(color online) Time series (top left), PFA image (top right), normalized target strength (bottom left) and \(k\)-space (bottom right) 0.032-inch copper pipe. All plots logarithmically scaled.}
\label{fig:airsas-cpbf}
\end{figure*}

\subsubsection*{FFT MCA}

Since FFT MCA is signal agnostic we apply it to the 0.032-inch copper pipe just as in Section~\ref{sec:airsas-ch}.  Using 1000 iterations of FFT MCA BP we produce the separated PFA images shown in Figure~\ref{fig:airsas-fft-cp}.  It is immediately apparent that most of the late-time energy, including the shorter duration late-time ``rings,'' are contained in the long duration image.  As a result the average late-time, error for the long-duration component is a relatively low 17.7\%, at the cost of a higher 64.1\% error for the early-time short-duration component.  It is slightly unexpected that these apparently short duration signals can be more sparsely represented in the frequency domain; however, analysis of the spectrum shows that while this late time energy is apparently time limited it is nevertheless not particularly broad band.  This is due to the fact that these late arriving wave packets are shaped reflections of the LFM used to ensonify the object.

\begin{figure*}[h]
\centering
\includegraphics[width=4in]{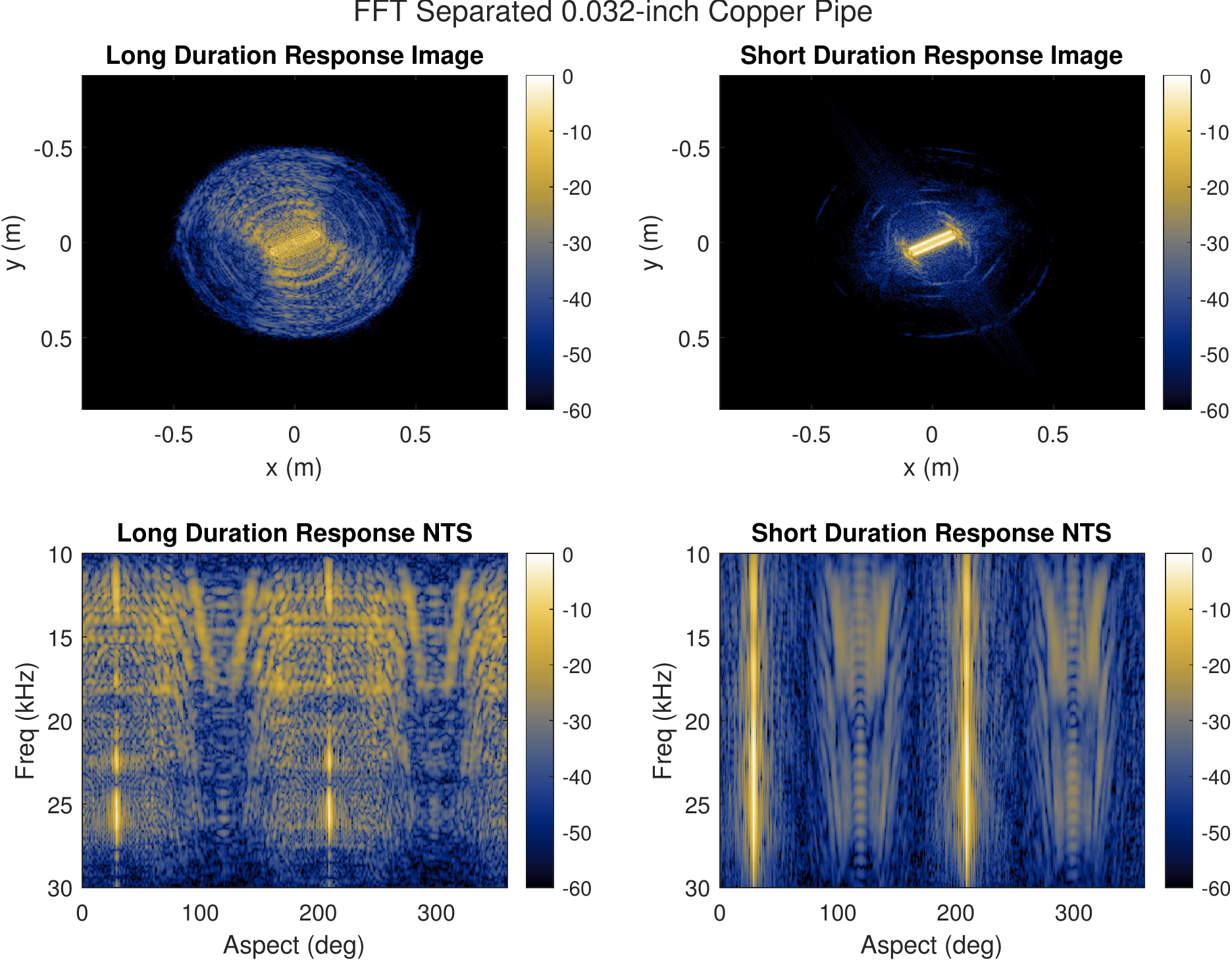}
\caption{(color online) PFA image (top) and normalized target strength (bottom) for the FFT separated long-duration component (left) and short-duration component (right) of the copper pipe object. The plots share a pairwise common color scale.  The short-duration component has an average early-time relative error of 64.1\% and the long-duration component has an average late-time relative error of 17.7\%. }
\label{fig:airsas-fft-cp}
\end{figure*}

\subsubsection*{ESP MCA}

Next we will separate the copper pipe time series using the same ESP MCA parameters as Section~\ref{sec:airsas-ch}.  The short and long-duration PFA images resulting from 1000 iterations of BP are shown in  Figure~\ref{fig:airsas-esp-cp}.  We see that unlike the FFT MCA much of the power of the late-time short-duration wavepackets has been placed in the short-duration PFA image.  The separation is a bit muddled overall, although there is clear distributed late-time energy in the long-duration plot over the expected range of angles.  For ESP MCA, the long-duration error is worse than the FFT case, with an average late-time error of 62.5\%, but the average short-duration early-time error is a better 27.4\%.  Also notable is that the normalized target strength plots in Figures~\ref{fig:airsas-fft-cp} and \ref{fig:airsas-esp-cp} emphasize different features. The ``V'' shaped signatures between 10kHz and 20kHz around 130 degrees and 300 degrees seem to move from the short-duration component to the long-duration component.

\begin{figure*}[h]
\centering
\includegraphics[width=4in]{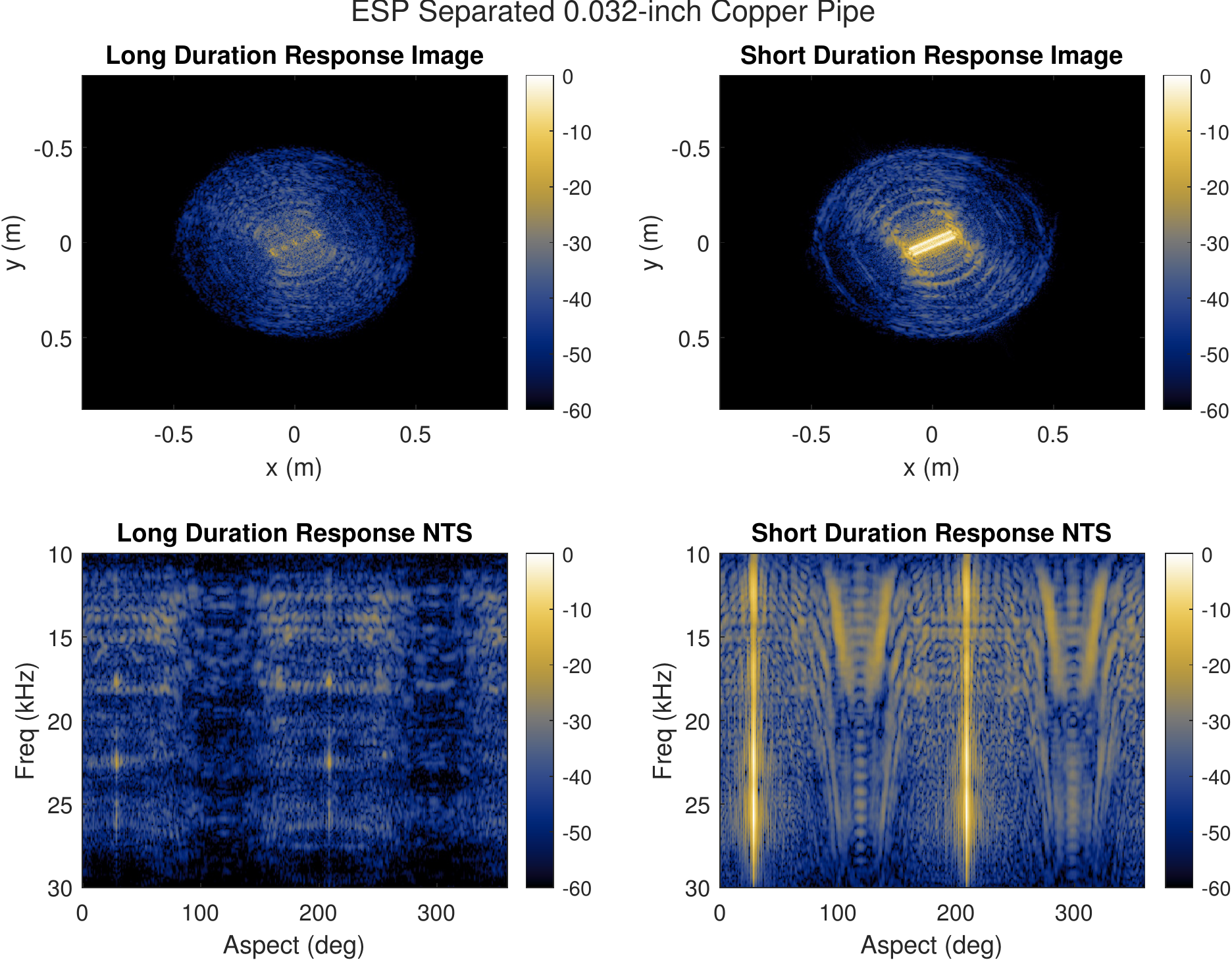}
\caption{(color online) PFA image (top) and normalized target strength (bottom) for the ESP separated long-duration component (left) and short-duration component (right) of the copper pipe object.  The plots share a pairwise common color scale.  The short-duration component has an average early-time relative error of 27.4\% and the long-duration component has an average late-time relative error of 62.5\%. }
\label{fig:airsas-esp-cp}
\end{figure*}

Traditionally, one would tune the MCA \(\lambda_i\) parameters in order to move signal energy between the separated components to achieve some desired result.  While this is often a useful practical step, the determination of \(\lambda_i\) is not obviously connected to the underlying signal characteristics.  One of the benefits of the ESP MCA approach is that ESP windows can be tuned based on properties of the signal in question.  For example, if it is desired that the long duration component include all (or more) of the late-time energy to improve early-time/late-time separation, then that can be accomplished by shortening the windows available to the short-duration component and including more quickly decaying exponentials in the long-duration representation.  This makes it easier for the exponential envelopes to sparsely represent shorter signals and more expensive for the rectangular windows to compete for the same short signals.  In this case, we shorten our rectangular window lengths considerably and add an additional time constant of 1ms, resulting in:
\begin{align*}
T_l &= 0.01, 0.05\text{ms}, \\
\tau_l &= 1.00, 1.78, 3.16, 5.62, 10.00, 17.78, 31.62\text{ms}.
\end{align*}
Performing 1000 iterations of ESP MCA BP with these parameters results in the separation shown in Figure~\ref{fig:airsas-alt-esp-cp}.

\begin{figure*}[h]
\centering
\includegraphics[width=4in]{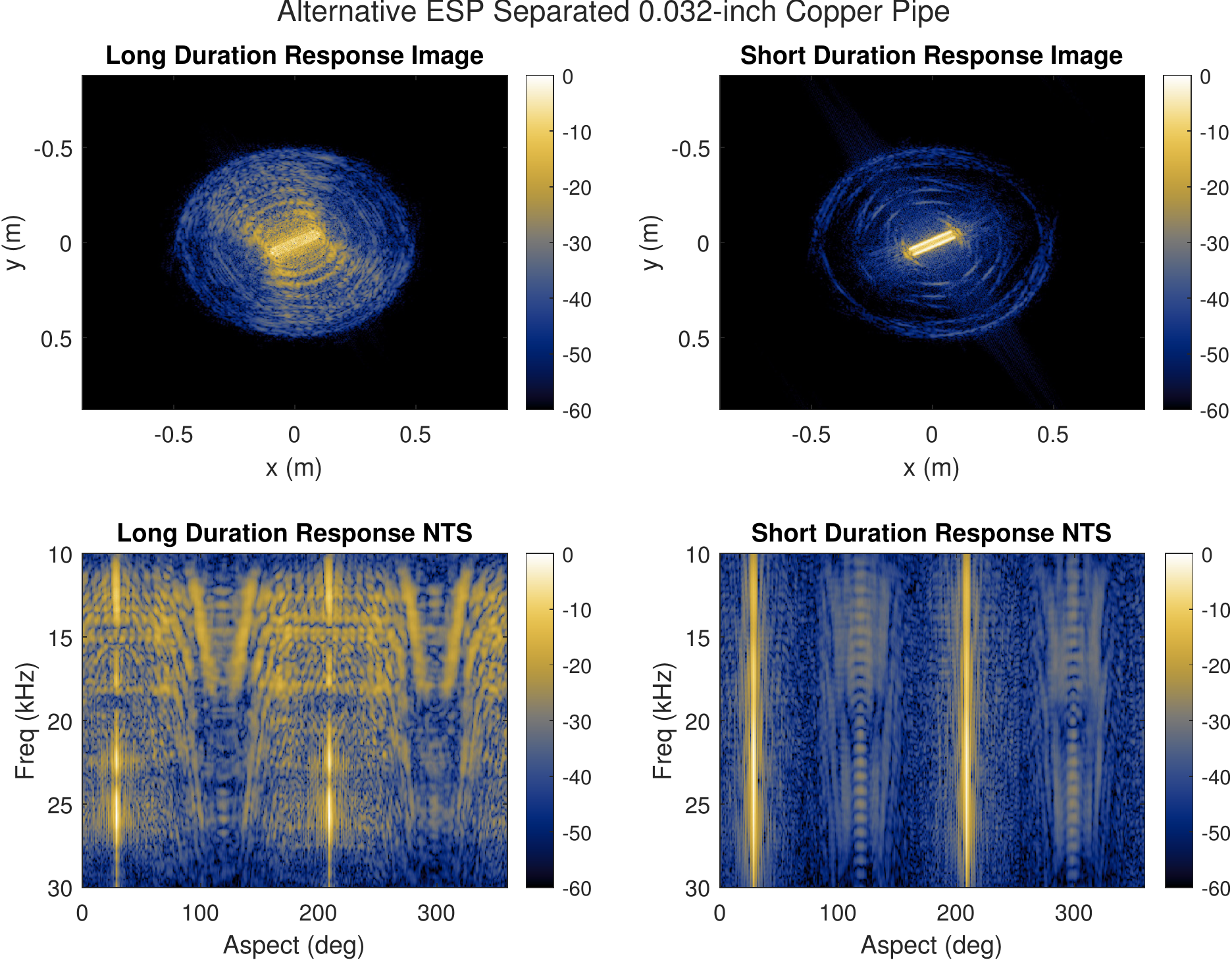}
\caption{(color online) PFA image (top) and normalized target strength (bottom) for the ESP separated long-duration component (left) and short-duration component (right) of the copper pipe object using an alternative set of envelopes.  The plots share a pairwise common color scale.  The short-duration component has an average early-time relative error of 74.6\% (3 dB) and the long-duration component has an average late-time relative error of 15.97\% (16 dB). }
\label{fig:airsas-alt-esp-cp}
\end{figure*}

In this case we get something that looks more like the FFT separation, with most of the late-time power (including the short-duration components) in the long-duration image.  There is more latetime energy in the short-duration image for this alternative ESP MCA than FFT MCA and the average short-duration early-time error is larger at 74.6\%.  The long-duration late-time average error 15.9\%, an improvement over FFT MCA.  Table~\ref{table:airsas-cp} summarizes the performance metrics for the copper pipe, with the alternative ESP MCA producing the best early time image and the ESP MCA producing the best late time image.  In this case, the table does not tell the whole story.  The overall performance of each method depends on whether the motivating need for MCA signal separation would benefit from late-time short-duration energy grouped with the initial scattering return or with the more diffuse late-time return.

\begin{table}
\centering
\begin{tabular}{c | c c }
Method & \(m_1\)  & \(m_2\) \\ \hline
FFT MCA             & 64.1\% & 17.7\%  \\
ESP MCA             & {\bf 27.4\%} & 62.5\% \\
Alternative ESP MCA & 74.6\% & {\bf 15.9\%}
\end{tabular}
\caption{Short-duration early-time relative error \(m_1\) and long-duration late-time relative error \(m_2\) for FFT MCA and ESP MCA for the copper pipe.}
\label{table:airsas-cp}
\end{table}

This section demonstrates some of the complexities of applying MCA to experimental data.  Performance of MCA representations will depend on the signals in question and determining the correct separation parameters is not always obvious.  In the case of ESP MCA, adjusting the envelope parameters can be done in a more principled fashion than adjusting the \(\lambda_i\)-parameters.  In either case, finding a way to allow the MCA representations to be data informed, rather than completely model driven, is a potential topic for further research.

\section{Conclusions}
\label{sec:discussion}

Motivated by the problem of separating the early-time and late-time returns from the acoustic response of an elastic object, we have presented a pair of MCA techniques which can successfully separate the short-duration and long-duration components without the need for a reference signal or time gates.  This isolates the late-time returns, with the geometric scattering response generally present in the short-duration component and high-\(Q\) resonances present in the long-duration components.  Successful separation was achieved on both clean and noisy analytic model data as well as experimentally collected in-air sonar time series.

The FFT MCA approach is signal agnostic and does a reasonable job of signal separation, performing very similarly to ESP MCA in the clean Stanton model case (see Table~\ref{table:stanton}).  FFT MCA also has the benefit of being extremely fast, but is rigid and cannot be tuned to fit a particular signal outside of the traditional \(\lambda_i\) parameters.  The ESP MCA application had the best metric scores in analytic time series analysis.  ESP MCA has a flexible signal model which can be tuned to support a wide variety of signals as demonstrated in Section~\ref{sec:airsas-cp}.  It does take orders of magnitude longer to run than FFT MCA, however.

When applied to noisy signals we found in Section~\ref{sec:stanton-noise} that MCA can turn sharp nulls in the spectrum of a time series, which are easily filled in by Gaussian noise, into peaks in the spectrum of the long-duration component, which are easier to identify in the presence of noise.  This is most obvious in the Stanton's model impulse response (see Figures~\ref{fig:stanton-fft} and \ref{fig:stanton-esp}) but can also be seen in the ESP MCA separation of the noisy Stanton's model LFM response (see Figure~\ref{fig:stanton-noise-esp}).  We also find evidence of this behavior in experimentally collected AirSAS data with the hyperbolic signatures that appear in the long-duration normalized target strength plots for the 0.032-inch hollow copper cylinder dataset (see Figure~\ref{fig:airsas-esp-ch}) but are obscured by the stronger short-duration geometric response in the unseparated data.  The ability to utilize features derived from separated components is an ongoing topic of research.

Additionally, Section~\ref{sec:airsas} demonstrates compatibility of MCA with SAS image reconstruction.  The ESP and FFT MCA techniques produced short-duration and long-duration PFA images which split the initial loud response of thin-walled cylindrical objects from the diffuse energy produced by long-duration ringing.  This was done without time gating and in the presence of both experimental noise and overlapping returns with varying start times.  Notably we have presented examples where long-duration late-time energy is separated from superimposed short-duration late-time energy (see Figure~\ref{fig:airsas-esp-cp}).  While the performance metric results were less clear cut for the experimental data, ESP MCA was capable of performing similar to FFT MCA (see Table~\ref{table:airsas-cp}) while being significantly more flexible in its signal model.  While both MCA approaches were designed to separate short-duration/long-duration components, the ultimate goal is to separate early-time/late-time components in order to preserve the assumptions of the image formation model.  It is an open question if the spectral characteristics of early-time versus short-duration late-time responses could be used in an MCA context.  The interplay between MCA and multi-path arrivals is an additional, related, topic currently being studied.

Moving forward there are quite a few potential applications of MCA to acoustic signals.  The separation process could be used in the formation of SAS imagery to either reduce late-arriving energy, allowing for a sharper representation of the object, or to highlight late-time ringing energy, identifying objects with elastic behaviors from those without.  Additionally, the spectral peaks resulting from high-\(Q\) modes in the long-duration component could be used as features for classification.  More broadly, MCA could be used to at least partially separate any components of a signal which feature sparse representation, such as overlapping acoustic returns from two pings with significantly different spectra, and as such FFT and ESP MCA provide flexible tools for tackling a fundamental acoustics challenge.

\begin{acknowledgments}
This work was sponsored in part by the Department of the Navy, Office of Naval Research under ONR award numbers N00014-18-1-2820 and N00014-19-1-2221.
\end{acknowledgments}

\appendix
\section{ESP Frames}
\label{apx:esp}

Consider the complex finite-dimensional Hilbert space \(\mathbb{C}^N\). All norms ($\|\cdot\|$) are computed in the $\ell_2$ sense unless otherwise stated. A {\em tight frame} is a collection of vectors \(\{\ba_i\}_{i=0}^{K-1}\) in \(\mathbb{C}^N\) and $\alpha > 0$ such that
\begin{equation}
\label{eq:tight-identity}
\|\bw\|^2 = \alpha\sum_{i=0}^{K-1} |\langle \bw, \ba_i\rangle|^2 \ \text{for all \(\bw\in \mathbb{C}^N\).}
\end{equation}
A {\em Parseval frame} is a tight frame with \(\alpha = 1\).  We define the class of {\em Enveloped Sinusoid Parseval Frames} by applying enveloping functions to the Discrete Fourier Transform (DFT) basis as in the following
\begin{theorem}
\label{thm:esp}
Given a set of nonzero \(N\)-dimensional vectors \(\{\be_l\}_{l=0}^{L-1}\), the vectors \(\{\ba_{l,k,m}\}\) defined by
\[
a_{l,k,m}[n] = e_l[n-m \bmod N]\exp(2\pi j k(n-m)/N)
\]
for \(l=0,\ldots, L-1\) and \(k,m,n=0,\ldots,N-1\) form a tight frame with \(\alpha = N \sum_l \|\be_l\|^2\).
\end{theorem}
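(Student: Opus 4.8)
The plan is to verify the tight-frame identity \eqref{eq:tight-identity} directly for an arbitrary $\bw \in \mathbb{C}^N$ by evaluating the analysis coefficients $\langle \bw, \ba_{l,k,m}\rangle$ and summing their squared magnitudes over the three indices $k$, $m$, $l$ in sequence. The only nontrivial tool required is Parseval's relation for the (unnormalized) DFT; everything else is index bookkeeping. Since the collection is finite, no convergence issues arise and the frame bounds are automatic once the identity is established.

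First I would fix the envelope index $l$ and the shift index $m$ and write out the inner product
\[
\langle \bw, \ba_{l,k,m}\rangle = \sum_{n=0}^{N-1} w[n]\,\overline{e_l[n-m \bmod N]}\,\exp(-2\pi j k(n-m)/N).
\]
Factoring the unimodular phase $\exp(2\pi j k m/N)$ out of the sum, the remaining expression is exactly the DFT, evaluated at frequency $k$, of the pointwise product $g_{l,m}[n] := w[n]\,\overline{e_l[n-m \bmod N]}$. Because that phase has modulus one, $|\langle \bw, \ba_{l,k,m}\rangle|^2 = |\widehat{g}_{l,m}[k]|^2$. Summing over $k$ and invoking Parseval for the unnormalized DFT then gives $\sum_{k=0}^{N-1} |\langle \bw, \ba_{l,k,m}\rangle|^2 = N \sum_{n=0}^{N-1} |w[n]|^2\,|e_l[n-m \bmod N]|^2$.

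Next I would carry out the sum over the shift index $m$. Interchanging the order of summation, $\sum_{m=0}^{N-1} |e_l[n-m \bmod N]|^2 = \|\be_l\|^2$ for every fixed $n$, since as $m$ ranges over the residues $0,\dots,N-1$ the index $n-m \bmod N$ is a permutation of those same residues. Hence $\sum_{k,m} |\langle \bw, \ba_{l,k,m}\rangle|^2 = N\,\|\be_l\|^2\,\|\bw\|^2$, and summing over the finitely many envelope indices $l$ yields $\sum_{l,k,m} |\langle \bw, \ba_{l,k,m}\rangle|^2 = \big(N\sum_{l}\|\be_l\|^2\big)\|\bw\|^2$, which is the asserted tight-frame identity with $\alpha = N\sum_l \|\be_l\|^2$.

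I do not anticipate a serious obstacle, as each summation collapses cleanly; the points demanding care are (i) keeping the modular arithmetic $n - m \bmod N$ consistent so that the shift sum genuinely covers every residue exactly once, which is what makes $\sum_m |e_l[n-m\bmod N]|^2$ independent of $n$, (ii) confirming that the frequency phase is unimodular and therefore drops out under $|\cdot|^2$, and (iii) tracking the factor of $N$ that the unnormalized DFT introduces in Parseval's relation, since this is the source of the leading $N$ in the constant. A final step is to reconcile the emergent constant with the normalization convention of \eqref{eq:tight-identity} and to note that the hypothesis $\be_l \neq 0$ forces $\alpha > 0$, so the collection is a genuine frame rather than merely a Bessel system.
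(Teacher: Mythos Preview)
Your proposal is correct and follows essentially the same route as the paper's own proof: sum over $k$ via Parseval/Plancherel for the DFT, then collapse the sum over $m$ using the fact that $n-m\bmod N$ runs over all residues, and finally sum trivially over $l$. The only cosmetic difference is that the paper packages the same computation in operator notation, writing $\ba_{l,k,m} = \bS^m \bD(\be_l)\bs_k$ for shift and diagonal-multiplication operators $\bS,\bD$ and then pushing the adjoints onto $\bw$ before invoking Plancherel, whereas you work directly with indices.
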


\begin{proof}
Let \(\{\bs_k\}_{k=0}^{N-1}\) denote the non-unitary DFT basis while \(\bS:\mathbb{C}^N\to\mathbb{C}^N\) and \(\bD:\mathbb{C}^N\to M(\mathbb{C}^N)\) are defined by
\begin{align*}
	Sw[n] &= w[n-1 \bmod N], \\
	[D(v)w][n] &= v[n]w[n],
\end{align*}
for \(\bv,\bw \in\mathbb{C}^N\) and \(n=0,\ldots,N-1\). Using these operators \(\ba_{l,k,m} = \bS^m \bD(\be_l) \bs_k\). As \((\bS^m)^* = \bS^{-m}\) and \((\bD(\be_l))^* = \bD(\overline{\be_l})\), we have for \(\bw\in \mathbb{C}^N\)
\begin{align*}
	\sum_{k,l,m} |\langle \ba_{k,l,m}, \bw\rangle|^2 &= \sum_{k,l,m} |\langle \bS^{m} \bD(\be_l) \bs_k, \bw \rangle|^2 \\
	&= \sum_{l,m} \sum_k|\langle \bs_k, \bD(\overline{\be_l})\bS^{-m} \bw\rangle|^2.
\end{align*}
Because \(\{\bs_k\}\) is the (non-unitary) DFT basis, Plancherel's theorem implies
\begin{align*}
\sum_{k,l,m}& |\langle \ba_{k,l,m}, \bw\rangle|^2= N\sum_{l,m} \|\bD(\overline{\be_l})\bS^{-m} \bw\|^2 \\
=& N\sum_{l,m,n} |\overline{e_l[n]} w[n+m \bmod N]|^2 \\
=& N\sum_{l,n} |e_l[n]|^2 \sum_m |w[n+m \bmod N]|^2.
\end{align*}
Since \(n+m\bmod N\) ranges from \(0\) to \(N-1\),
\begin{align*}
\sum_{k,l,m} |\langle \ba_{k,l,m}, \bw\rangle|^2 &= N\sum_{l,n} |e_l[n]|^2 \|\bw\|^2 \\
&= \left(N \sum_l \|e_l\|^2 \right)\|\bw\|^2. \qedhere
\end{align*}
\end{proof}
It is an immediate corollary that if \(\|\be_l\| = (NL)^{-1/2}\) for all \(l\) then the vectors \(\{\ba_{k,l,m}\}\) form a Parseval frame. Notably, the fact that \(\{\ba_{l,k,m}\}\) is a tight frame can also be derived by viewing it as a multi-window STFT.  More importantly, the conditions on the envelopes are minimal: even a set of unrelated envelopes will admit a frame under this procedure.

With regards to representation, the defining characteristic of tight frames is that the vector \(\bw\) can be reconstructed via the formula \(\bw = \frac{1}{\alpha} \bA \bA^* \bw\) where \(\bA\) is the synthesis matrix defined by \(A[n,l,k,m] = a_{l,k,m}[n]\) \cite{framesforundergraduates}.  We refer to \(\bA^*\) as the analysis operator and note that in the case of Parseval frames \(\bA^*\) serves as the frame's right-inverse.  As \(\bA\) and \(\bA^*\) are very large, direct computation of the matrix products expensive. However both analysis and synthesis can be sped up significantly using the FFT.  If we define the vector \(\bc_{k,l}\) as \(c_{k,l}[m] = A^*w[k,l,m]\) one can show, after some matrix algebra, that
\begin{align}
\bc_{k,l} & = \bF^* \bD( \bS^k \bF \bH(\be_l))  \bF\bw, \label{eq:analysis} \\
\bw &= \frac{1}{\alpha} \bF^* \sum_{k,l} \bD(\bS^{-k} \bF \be_l) \bF \bc_{k,l}, \label{eq:synth}
\end{align}
where \(\bH:\mathbb{C}^N\to \mathbb{C}^N\) is the (conjugate linear) operator defined by \(Hw[n] = \overline{w[N-n \bmod N]}\). In practical terms, the formulations in \eqref{eq:analysis} and \eqref{eq:synth} admit parallelization of the time dimensions (\(m\) and \(n\)), and the underlying computations can be sped up via the FFT and GPU parallelization.

\bibliography{References}

\end{document}